\newtheorem{thm}{Theorem}[section]
\newtheorem{lem}[thm]{Lemma}
\newtheorem{prop}[thm]{Proposition}
\theoremstyle{definition}
\newtheorem{defn}[thm]{Definition}
\theoremstyle{remark}
\newtheorem{rem}[thm]{Remark}
\newtheorem{ex}[thm]{Example}
\newcommand{\NN}{\mathbb{N}}                % Natural numbers
\newcommand{\ZZ}{\mathbb{Z}}                % Relative integers
\newcommand{\RR}{\mathbb{R}}                % Real numbers
\newcommand{\CC}{\mathbb{C}}                % Complex numbers
\newcommand{\cK}{\Bbbk}                     % Complex or real field numbers
\newcommand{\SL}{\mathrm{SL}}               % Special linear group
\newcommand{\SO}{\mathrm{SO}}               % Special orthogonal group
\newcommand{\OO}{\mathrm{O}}                % Orthogonal group
\newcommand{\SU}{\mathrm{SU}}               % Special unitary group
\newcommand{\slc}{\mathfrak{sl}}            % Special linear algebra
\newcommand{\Ela}{\mathbb{E}\mathrm{la}}    % Space of elasticity tensors
\newcommand{\ST}[1]{\mathbb{S}_{#1}}        % Space of harmonic tensors
\newcommand{\HT}[1]{\mathbb{H}_{#1}}        % Space of harmonic tensors
\newcommand{\Sn}[1]{\mathcal{S}_{#1}}	    % Space of binary forms
\newcommand{\Hn}[1]{\mathcal{H}_{#1}}	    % Space of harmonic polynomials
\newcommand{\inv}[1]{\mathbf{Inv}(#1)}	    % Invariant algebra
\newcommand{\cov}[1]{\mathbf{Cov}(#1)}	    % Covariant algebra
\newcommand{\uu}{\pmb{u}}                   % vector in R^{3}
\newcommand{\vv}{\pmb{v}}                   % vector in R^{3}
\newcommand{\ww}{\pmb{w}}                   % vector in R^{3}
\newcommand{\xx}{\pmb{x}}                   % vector in R^{3}
\newcommand{\bxi}{\pmb{\xi}}                % vector in C^{2}
\newcommand{\yy}{\pmb{y}} 				% vector from R^{3}
\newcommand{\ba}{\mathbf{a}} 				% second-order tensor
\newcommand{\bb}{\mathbf{b}} 				% second-order tensor
\newcommand{\bc}{\mathbf{c}} 				% second-order tensor
\newcommand{\bd}{\mathbf{d}} 				% second-order tensor
\newcommand{\bC}{\mathbf{C}} 				% fourth-order tensor
\newcommand{\bD}{\mathbf{D}} 				% fourth-order tensor
\newcommand{\bT}{\mathbf{T}} 				% generic tensor
\newcommand{\bH}{\mathbf{H}} 				% generic harmonic tensor
\newcommand{\qq}{\mathrm{q}} 				% standard quadratic form
\newcommand{\bp}{\mathrm{p}} 				% polynomial in 3 variables
\newcommand{\bh}{\mathrm{h}} 				% harmonic polynomial in 3 variables
\newcommand{\ff}{\mathbf{f}} 				% binary form
\newcommand{\bg}{\mathbf{g}} 				% binary form
\newcommand{\hh}{\mathbf{h}} 				% binary form
\newcommand{\kk}{\mathbf{k}} 				% binary form
\newcommand{\trans}[3]{( #1,#2 )_{#3}}              % transvectant on SL2
\DeclareMathOperator{\Ad}{Ad}
\DeclareMathOperator{\tr}{tr}
\DeclareMathOperator{\dev}{dev}
\newcommand{\norm}[1]{\lVert#1\rVert}       % norm
\newcommand{\set}[1]{\left\{#1\right\}}     % set
\begin{document}

\title{A minimal integrity basis for the elasticity tensor}%

%  Information for first author
\author[M. Olive]{M. Olive}
\address{LMT-Cachan (ENS Cachan, CNRS, Universit\'{e} Paris Saclay), F-94235 Cachan Cedex, France}
\email{marc.olive@math.cnrs.fr}

%  Information for second author
\author[B. Kolev]{B. Kolev}
\address{Aix Marseille Universit\'{e}, CNRS, Centrale Marseille, I2M, UMR 7373, 13453 Marseille, France}
\email{boris.kolev@math.cnrs.fr}

%  Information for third author
\author[N. Auffray]{N. Auffray}
\address{MSME, Universit\'{e} Paris-Est, Laboratoire Mod\'{e}lisation et Simulation Multi Echelle, MSME UMR 8208 CNRS, 5 bd Descartes, 77454 Marne-la-Vall\'{e}e, France}
\email{Nicolas.auffray@univ-mlv.fr}

%\thanks{}%
\subjclass[2010]{74E10 (15A72 74B05)}%
\keywords{Elasticity tensor; Classical Invariant Theory; Integrity Basis; Gordan Algorithm}%

\date{\today}%
%\dedicatory{}%
%\commby{}%

% ----------------------------------------------------------------
\begin{abstract}
  We definitively solve the old problem of finding a minimal integrity basis of polynomial invariants of the fourth-order elasticity tensor $\bC$. Decomposing $\bC$ into its $\SO(3)$-irreducible components we reduce this problem to finding joint invariants of a triplet $(\ba, \bb, \bD)$, where $\ba$ and $\bb$ are second-order harmonic tensors, and $\bD$ is a fourth-order harmonic tensor. Combining theorems of classical invariant theory and formal computations, a minimal integrity basis of $297$ polynomial invariants for the elasticity tensor is obtained for the first time.
\end{abstract}

\maketitle

% ----------------------------------------------------------------
\begin{scriptsize}
  \setcounter{tocdepth}{2}
  \tableofcontents
\end{scriptsize}
\clearpage

% ----------------------------------------------------------------
\section{Introduction}
\label{sec:intro}
% ----------------------------------------------------------------

In solids mechanics when the matter is slightly deformed the local state of strain is modelled, at each material point, by a second-order symmetric tensor $\boldsymbol{\varepsilon}$. The local stress resulting to the imposed strain is classically described by another second-order symmetric tensor, the Cauchy stress $\boldsymbol{\sigma}$. The way stress and strain are related is defined by a constitutive law. According to the intensity of strain, the nature of the material, and external factors such as the temperature, the nature and type of constitutive laws can vary widely~\cite{LC1994,FPZ1998}.

Among constitutive laws, linear elasticity is one of the simplest model. It supposes a linear relationship between the strain and the stress tensor \emph{at each material point}, $\boldsymbol{\sigma} = \bC: \boldsymbol{\varepsilon}$, in which $\bC$ is a fourth-order tensor, element of a 21-dimensional vector space $\Ela$ \cite{Gur1973,Cow1989,FV1996,AO2008}. From a physical point a view, this relation, which is the 3D extension of the Hooke's law for a linear spring: $F=k\Delta x$, encodes the elastic properties of a body in the small perturbation hypothesis~\cite{Sal2012}.

Due to the existence of a micro-structure at a scale below the one used for the continuum description, elastic properties of many homogeneous materials are anisotropic, i.e. they vary with  material directions. Elastic anisotropy is very common and can be encountered in natural materials (rocks, bones, crystals, \ldots) as well as in manufactured ones (composites, textiles, extruded or rolled irons, \ldots)~\cite{Boe1987,AO2008,Cow2013}. Measuring and modelling the elastic anisotropy of materials is of critical importance for a large kind of applications ranging from the anisotropic fatigue of forged steel \cite{PMM2009}, the damaging of materials \cite{Fra2012,DD2016} to the study of wave propagation in complex materials such as bones \cite{ACVBR1984,RNN2016} or rocks \cite{Bac1970,HT2005}. More recently, the development of acoustic and elastic meta-materials and the wish to conceive paradoxical materials gave a new impulse for the study of anisotropic elasticity \cite{Mil2002,He2004,AO2008,RA2016}.

Working with elastic materials imply the need to identify and distinguish them. A natural question is ``\emph{How to give different names to different elastic materials ?}''. Despite its apparent simplicity, this question formulated for 3D elastic media is a rather hard problem to solve. An elasticity tensor $\bC$ represents a homogeneous material in a specific orientation with respect to a \emph{fixed frame} and a rotation of the body results in another elasticity tensor $\overline{\bC}$ representing the same material. Each homogeneous material is characterized by many elasticity tensors and coordinate-based designation clearly cannot label elastic materials uniquely.

From a mathematical point of view, the material change of orientation makes $\bC$ move in $\Ela$. Classifying anisotropic materials is amount to describe the orbits of the action of the rotation group $\SO(3,\RR)$ on $\Ela$. This can be achieved by determining a finite system of \emph{invariants} which \emph{separates} the orbits.

The analog problem in plane elasticity for the elasticity tensor in bi-dimensional space under the action of the orthogonal group $\OO(2)$ has already been solved by numerous authors~\cite{Ver1982,Gre1995,BOR1996,Via1997,FV2014,AR2016}).

The problem in 3D is much more complicated. The first attempt to define such intrinsic parameters goes back to the seminal work of Lord Kelvin \cite{Tho1856}, rediscovered later by Rychlewski \cite{Ryc1984} and followed since then by many authors \cite{Bet1987,MC1990,Xia1997,Ost1998,BBS2008}. It is based on the representation of the elasticity tensor as a symmetric second-order tensor in $\RR^6$ and the use of its spectral decomposition. However, even if the six eigenvalues of this second-order tensor are invariants, they do not separate the orbits. Worse, the geometry of the problem, which is based on the group $\SO(3,\RR)$ and not $\SO(6,\RR)$ is lost.

The approach we adopt in this paper is somehow different and relies on \emph{representation theory}~\cite{IG1984,Ste1994,GW2009} of the rotation group $\SO(3,\RR)$. It seems to have been pointed out first by Boelher, Kirillov and Onat~\cite{BKO1994} and was already used to describe the symmetry classes using relations between polynomial invariants~\cite{AKP2014a}.

The problem of finding out $\SO(3,\RR)$-invariants of a fourth-order tensor is not new, and has been investigated by many authors (for e.g \cite{Hah1974,Ver1982,Bet1987,Tin1987,Xia1997,Nor2007,Iti2015}). Generally, these invariants are computed using traces of tensor products \cite{Boe1987,Spe1971} and the method relies on some tools developed by Rivlin and others~\cite{SR1958/1959,SR1962,Smi65} for a family of second-order tensors.

There is also a wide literature concerning \emph{separating sets} (also known as \emph{functional bases})~\cite{WP1964,Wan1970,Smi1970}. However, no complete system of separating invariants for the elasticity tensor have been obtained so far. Most of the results exhibit only separating sets for some specific ``generic'' tensors~\cite{BKO1994,Ost1998} or tensors in a given symmetry class~\cite{AKP2014a}. It is also worth emphasizing that a \emph{local system} of coordinates on the orbit space (build up of 18 \emph{locally separating} invariants) should never be confused with a functional basis of $N$ invariants (which may be assimilated to a \emph{global system of parameters}, since they can be used to embed the orbit space in some $\RR^{N}$) \cite{BKO1994,BBS2008}. It is highly improbable that a (global) separating set of 18 (polynomial, rational or algebraic) invariants exists.

A finite set of polynomials generating the algebra of $\SO(3,\RR)$-invariant polynomials is called an \emph{integrity basis}. Note that an integrity basis is always a functional basis (for a real representation of a compact group), but the converse is generally false and it is usually expected to find a functional basis with fewer elements than a minimal integrity basis~\cite{Spe1971,Boe1978}. Integrity bases for the elasticity tensor have already been considered in the literature~\cite{BKO1994,BH1995,SB1997} but results are either incomplete or conjectural.

The main result of this paper, formulated as Theorem~\ref{thm:main-result}, is the determination, for the first time, of a \emph{complete and minimal integrity} basis of 297 polynomials for the elasticity tensor. Although, the theoretical tools to solve this question exist, since at least a hundred years, the effective resolution turns out to be highly complex in practice and has not been solved until now. Even if the exact size of an integrity basis for $\Ela$ was unknown, it was expected to be very large \cite{BKO1994,BH1995,Xia1997}, precluding its determination by hands. Note, furthermore, that the results presented here are not just bounded to questions in continuum mechanics but are also related to other fields such as quantum computation~\cite{Luq2007} and cryptography~\cite{LR2012}.

The computation of the integrity basis requires first the decomposition of the space $\Ela$ into irreducible factors under the $\SO(3,\RR)$-action. This decomposition, known in the mechanics community as the \emph{harmonic decomposition}, results in the splitting of the elasticity tensor $\mathbf{C}$ into five pieces $(\lambda, \mu, \mathbf{a}, \mathbf{b}, \mathbf{D})$, where $\lambda, \mu$ are scalars, $\mathbf{a}, \mathbf{b}$ are deviators and $\mathbf{D}$ is a totally symmetric, traceless fourth order tensor.

Although integrity bases for invariant algebras of each individual irreducible factor $\lambda$, $\mu$, $\ba$, $\bb$, $\bD$ (called \emph{simple invariants}) were already known~\cite{BKO1994,SB1997}, it was still an open question, until now, to determine a full set of \emph{joint invariants} (involving several factors), which, together with the simple invariants, form a minimal integrity basis for the polynomial invariant algebra of $\Ela$.

To compute these joint invariants, we have used a link between \emph{harmonic tensors} in $\RR^{3}$ and \emph{binary forms} (homogeneous polynomials in $\CC^{2}$), reducing the problem to \emph{classical invariant theory}~\cite{Der1999,KP2000,DK2008,Stu2008,DK2015} and allowing to apply Gordan's algorithm~\cite{Gor1868,GY2010} to produce a generating set. This algorithm, which is effective, is the core to make explicit calculations in this field.

% ----------------------------------------------------------------
\subsection*{Organization}
% ----------------------------------------------------------------

The paper is organized as follows. In section~\ref{sec:elasticity-tensor}, we introduce the elasticity tensor and the $\SO(3,\RR)$ representation on $\Ela$. In section~\ref{sec:harmonic-decomposition}, we introduce the harmonic decomposition of the elasticity tensor. Basic material about polynomial invariants of the elasticity tensor are recalled in section~\ref{sec:polynomial-invariants}. The reduction of the problem of computing an integrity basis to classical invariant theory is detailed in section~\ref{sec:harmonic-tensor-versus-binary-forms}. Section~\ref{sec:covariants} is principally devoted to explain the main tool that we have used, namely the \emph{Gordan algorithm}. The explicit results are presented in section~\ref{sec:explicit-computations}. Besides, two appendices are provided, one on the harmonic decomposition of a general homogeneous polynomial (and hence a symmetric tensor), and one on the \emph{Cartan map}, used to build an explicit and equivariant isomorphism between the space of harmonic tensors of order $n$ and the space of binary forms of degree $2n$.

% ----------------------------------------------------------------
\subsection*{Notations}
% ----------------------------------------------------------------

In the following $\cK$ indicates a field that can be either $\RR$ or $\CC$. The following spaces will be involved:
\begin{itemize}
  \item $\ST{n}(\cK^{3})$ the space of $n$-th order \emph{totally symmetric tensors} on $\cK^{3}$;
  \item $\HT{n}(\cK^{3})$ the space of \emph{harmonic tensors} of order $n$;
  \item $\cK[V]$ the space of polynomial functions (with coefficients in $\cK$) on the vector space $V$;
  \item $\cK_{n}[V]$ the finite-dimensional sub-space of homogeneous polynomials of degree $n$ on $V$;
  \item $\Hn{n}(\cK^{3})$ the space of \emph{harmonic polynomials} of degree $n$;
  \item $\Sn{n}$ the space of binary forms of degree $n$;
  \item $M_{d}(\cK)$ the space of $d$ dimensional square matrices over $\cK$.
\end{itemize}
In addition, we will adopt the following conventions:
\begin{itemize}
  \item $\gamma$ will be an element in $\SL(2,\CC)$;
  \item $g$ is an element of $\SO(3,\cK)$;
  \item $\bxi = (u,v)$ stands for a vector in $\CC^{2}$;
  \item $\vv = (x,y,z)$ stands for a vector in $\CC^{3}$ or $\RR^{3}$;
  \item $\ba, \bb, \bc, \bd$ are second-order tensors;
  \item $\bC, \bD$ are fourth-order tensors;
  \item $\bT$ is a generic tensor;
  \item $\bH$ is a generic harmonic tensor;
  \item $\bp$ is a polynomial on $\cK^{3}$;
  \item $\bh$ is a harmonic polynomial on $\cK^{3}$;
  \item $\ff, \bg, \hh, \kk$ are binary forms.
\end{itemize}

% ----------------------------------------------------------------
\section{The elasticity tensor and classification of materials}
\label{sec:elasticity-tensor}
% ----------------------------------------------------------------

In the infinitesimal theory of elasticity \cite{Gur1964}, the \emph{strain tensor} $\boldsymbol{\varepsilon}$ is defined, in Cartesian coordinates, as
\begin{equation*}
  \varepsilon_{ij} = \frac{1}{2}\,\left( \frac{\partial u_{i}}{\partial x^{j}} + \frac{\partial u_{j}}{\partial x^{i}}\right),
\end{equation*}
in which $\uu$ is the displacement field. Classically, internal forces are represented by a contravariant symmetric tensor field, $\boldsymbol{\sigma} = (\sigma^{ij})$, the \emph{Cauchy stress tensor}, and defined at each point of the material. In linear elasticity, the Cauchy stress tensor and the infinitesimal strain tensor are related by the \emph{generalized Hooke's law}
\begin{equation*}
  \sigma^{ij} = C^{ijkl}\,\varepsilon_{kl},
\end{equation*}
where the \emph{elasticity tensor} $\bC = (C^{ijkl})$ is a fourth-order tensor with index symmetry, called the \emph{minor symmetry}
\begin{equation}\label{eq:small-sym}
  C^{ijkl} = C^{jikl} = C^{ijlk}.
\end{equation}
In the case of \emph{hyper-elastic materials}, we have moreover the so-called \emph{major symmetry}
\begin{equation}\label{eq:big-sym}
  C^{ijkl} = C^{klij}.
\end{equation}
We define the space $\Ela$ as the $21$ dimensional vector space of fourth order tensors with index symmetries \eqref{eq:small-sym} and \eqref{eq:big-sym}.

A \emph{homogeneous material} is one for which the tensor field $\bC$ is constant. Thus, to each homogeneous material corresponds an elasticity tensor $\bC$ in $\Ela$, but this correspondence is not unique. Taking another orientation of the material within a \emph{fixed reference frame} corresponds to some transformation $g\in \SO(3,\RR)$. This rotation induces a transformation of the original elasticity tensor $\bC$:
\begin{equation}\label{equ:group_action}
  C^{ijkl} \mapsto g^{i}_{p}\,g^{j}_{q}\,g^{k}_{r}\,g^{l}_{s}\, C^{pqrs},
\end{equation}
where $\mathbf{g} = (g^{i}_{p}) \in \SO(3,\RR)$, which defines a representation of the rotation group $\SO(3,\RR)$ on the vector space $\Ela$, simply written as
\begin{equation*}
  \overline{\bC} = g\cdot \bC.
\end{equation*}

Since we are working on the Euclidean $3$-space, from now on, no distinction will be made between covariant and contravariant indices, and we shall use the notation $C_{ijkl}$.

From a mathematical point of view, different orientations of a given material lead to the following set of elasticity tensors
\begin{equation*}
  \mathcal{O}_{\bC}= \set{g\cdot \bC;\; g \in \SO(3,\RR)},
\end{equation*}
which is called the $\SO(3,\RR)$-\emph{orbit} of $\bC$. Hence, in geometric terms, an elastic material is a point in the \emph{orbit space} $\Ela/\SO(3,\RR)$.

An orbit space has a complicated structure in general. It is not a smooth manifold in general, due to the fact that different orbits may have different symmetry classes (it is however a smooth manifold, when the action is proper and free, that is when each isotropy group is trivial. This is the case for homogeneous spaces, for instance). In the case of the elasticity tensor, it is known that there are eight different symmetry classes~\cite{FV1996}, ranging from complete anisotropy (triclinic materials) to total isotropy. For a given elasticity tensor, the nature of its orbit depends deeply on its symmetry class. For instance, for an isotropic material, we have
\begin{equation*}
  g\cdot \bC = \bC, \qquad \forall g \in \SO(3,\RR).
\end{equation*}
In that case, the orbit of $\bC$ is reduced to $\bC$ itself and the rotation group is thus invisible. However, for any other symmetry class, an elasticity tensor and its orbit should never be confound anymore.

Consider now the measurements of the same (unknown) anisotropic elastic constants in two different labs, and suppose that there is no way to choose, \textit{a priori}, a specific orientation of the material\footnote{This means, in particular, that we do not have any information on the microstructure, or that this information does not allow us to choose a specific orientation.}. Then, the two measurements will result in two different sets of elastic constants. How can one decides whether the two sets of constants describe, or not, the same material? This question was asked by Boehler et al. in~\cite{BKO1994} and can be recast as: \emph{which parameters can be used to characterize intrinsic elastic properties of a given material?}

To answer this question, we need to define $\SO(3,\RR)$-\emph{invariant functions} on $\Ela$ which distinguish different materials. These sets of invariant functions, which \emph{separate the orbits}, are described in the literature under the generic name of a \emph{functional basis}~\cite{WP1964,BKO1994}. There is however no known algorithm to obtain such a set of parameters. This is the reason why we have chosen, following~\cite{BKO1994}, to study this question through \emph{group representation theory} and focus on \emph{polynomial invariants} and the determination of an \emph{integrity basis}, where computations are possible.

% ----------------------------------------------------------------
\section{Harmonic decomposition}
\label{sec:harmonic-decomposition}
% ----------------------------------------------------------------

Like a periodic signal can be decomposed into elementary sinusoidal functions, using the Fourier decomposition, any 3D tensor space $V$ can be decomposed into a finite direct sum of spaces which correspond to \emph{irreducible representations} of the rotation group $\SO(3,\RR)$, known as \emph{harmonic tensor spaces} $\HT{n}(\RR^{3})$~\cite{GSS1988,Ste1994}, and defined as follows.

Let $\ST{n}(\RR^{3})$ be the space of \emph{totally symmetric} tensors of order $n$ on $\RR^{3}$ (an $n$-order tensor is understood here as a $n$-linear form on $\RR^{3}$). Contracting two indices $i,j$ on a totally symmetric tensor $\bT$ does not depend on the particular choice of the pair $i,j$. Thus, we can refer to this contraction without any reference to a particular choice of indices. We will denote this contraction as $\tr \bT$, which is a totally symmetric tensor of order $n-2$ and call it the \emph{trace} of $\bT$.

\begin{defn}
  A \emph{harmonic tensor} of order $n$ is a totally symmetric tensor $\bT$ in $\ST{n}(\RR^{3})$ such that $\tr \bT = 0$. The space of harmonic tensors of order $n$ will be denoted by $\HT{n}(\RR^{3})$ (or simply $\HT{n}$, if there is no ambiguity). It is a sub-vector space of $\ST{n}(\RR^{3})$ of dimension $2n+1$.
\end{defn}

The rotation group $\SO(3,\RR)$ acts on $\ST{n}(\RR^{3})$ by the rule
\begin{equation*}
  (g \cdot \bT)(\vv_1,\dotsc,\vv_n) := \bT(g^{-1}\vv_1,\dotsc,g^{-1}\vv_{n}), \qquad g \in \SO(3,\RR).
\end{equation*}
The sub-space $\HT{n}(\RR^{3})$ of $\ST{n}(\RR^{3})$ is invariant under the action of $\SO(3,\RR)$. It is moreover \emph{irreducible} (it has no proper non-trivial invariant sub-space) and every irreducible representation of the rotation group $\SO(3,\RR)$ is isomorphic to some $\HT{n}(\RR^{3})$, see for instance~\cite{GSS1988,Ste1994}. Therefore, every $\SO(3,\RR)$-representation $V$ splits into a direct sum of \emph{harmonic tensor spaces} $\HT{n}(\RR^{3})$.

The space of elasticity tensors admits the following harmonic decomposition which was first obtained by Backus~\cite{Bac1970} (see also~\cite{Bae1993,FV1996,FV2006}):
\begin{equation}\label{eq:elastic-harmonic-decomposition}
  \Ela \simeq 2 \HT{0} \oplus 2 \HT{2} \oplus \HT{4}
\end{equation}
where $\simeq$ indicates an $\SO(3,\RR)$-equivariant isomorphism.
\begin{prop}\label{prop:boehler-decomposition}
  Each $\bC \in \Ela$ can be written as
  \begin{equation}\label{eq:explicit-elasticity-decomposition}
    \bC = (\lambda, \mu, \ba, \bb, \bD),
  \end{equation}
  where $\lambda, \mu \in \HT{0}$, $\ba, \bb \in \HT{2}$ and $\bD \in \HT{4}$.
\end{prop}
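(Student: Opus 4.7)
The plan is to make the abstract isomorphism \eqref{eq:elastic-harmonic-decomposition} concrete by constructing explicit $\SO(3,\RR)$-equivariant projections of $\bC$ onto each isotypic component. The starting observation is that from $\bC$ one can form two canonical symmetric second-order tensors by partial contraction, namely the \emph{dilatation tensor} $d_{ij} := C_{ijkk}$ and the \emph{Voigt tensor} $v_{ij} := C_{ikjk}$. The minor and major symmetries \eqref{eq:small-sym}--\eqref{eq:big-sym} imply at once that both $\bd$ and $\bv$ are symmetric second-order tensors, and the linear maps $\bC \mapsto \bd$ and $\bC \mapsto \bv$ are $\SO(3,\RR)$-equivariant since partial traces commute with the rotation action (this is a one-line computation using $g^{T}g = I$).

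Next, using the elementary decomposition $\ST{2}(\RR^{3}) \simeq \HT{0} \oplus \HT{2}$, which splits a symmetric $3\times 3$ matrix into its scalar trace part and its traceless deviatoric part, I would set
\begin{equation*}
  \lambda := \tfrac{1}{3}\tr\bd, \qquad \mu := \tfrac{1}{3}\tr\bv, \qquad \ba := \dev \bd, \qquad \bb := \dev \bv,
\end{equation*}
so that $\lambda, \mu \in \HT{0}$ and $\ba, \bb \in \HT{2}$. Up to an invertible change of basis inside each isotypic component, this accounts for the two copies of $\HT{0}$ and the two copies of $\HT{2}$ which appear in \eqref{eq:elastic-harmonic-decomposition}. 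The fourth-order piece is then defined as the remainder $\bD := \bC - \Phi(\lambda, \mu, \ba, \bb)$, where $\Phi$ is the explicit equivariant embedding of $2\HT{0} \oplus 2\HT{2}$ into $\Ela$ obtained by symmetrising the quadruple $(\lambda, \mu, \ba, \bb)$ into a fourth-order tensor satisfying the required index symmetries.

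The only non-routine step is to verify that the remainder $\bD$ actually belongs to $\HT{4}$, that is, $\bD$ is totally symmetric in its four indices \emph{and} both of its partial traces $D_{ijkk}$ and $D_{ikjk}$ vanish. The vanishing of the two partial traces is guaranteed tautologically by the construction of $\Phi$, which is calibrated so that the dilatation and Voigt tensors of $\Phi(\lambda, \mu, \ba, \bb)$ reproduce $\bd$ and $\bv$ exactly. The total symmetry of $\bD$ is the delicate point, since a priori $\bD$ inherits only the minor and major symmetries of $\bC$. An elegant way to close the argument is to invoke Schur's lemma together with the multiplicity-one occurrence of $\HT{4}$ in \eqref{eq:elastic-harmonic-decomposition}: the orthogonal complement of $\mathrm{Im}\,\Phi$, taken with respect to the canonical $\SO(3,\RR)$-invariant Frobenius inner product on $\Ela$, is an $\SO(3,\RR)$-stable complement of a submodule whose isotypic content exhausts all non-$\HT{4}$ components, hence must be isomorphic to $\HT{4}$. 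This forces $\bD$ to be harmonic and yields the announced decomposition $\bC = (\lambda, \mu, \ba, \bb, \bD)$.
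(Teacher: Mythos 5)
Your construction is correct in outline and is essentially the one the paper has in mind: the paper offers no proof of this proposition, deducing it from the abstract decomposition \eqref{eq:elastic-harmonic-decomposition} (due to Backus) and deferring explicit formulas to the cited references, while its remark immediately after the statement already introduces $\dev \mathbf{d}$ and $\dev \mathbf{v}$ as the source of $\ba$ and $\bb$, exactly as you do. What you add is a clean justification that the remainder $\bD$ is totally symmetric and traceless: since $\HT{4}$ occurs with multiplicity one in \eqref{eq:elastic-harmonic-decomposition} and the totally symmetric traceless fourth-order tensors form a subrepresentation of $\Ela$ isomorphic to $\HT{4}$, any stable complement of the $2\HT{0}\oplus 2\HT{2}$ part must coincide with that subspace; this is valid. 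The one step you assert rather than prove is the ``calibration'' of $\Phi$, namely that $\Phi$ can be chosen so that the dilatation and Voigt tensors of $\Phi(\lambda,\mu,\ba,\bb)$ reproduce $(\mathbf{d},\mathbf{v})$ exactly. This is equivalent to the surjectivity of the equivariant map $\bC\mapsto(\mathbf{d},\mathbf{v})$, hence to the invertibility of the two $2\times 2$ Schur-block matrices governing it on the $\HT{0}$ and $\HT{2}$ isotypic components --- a short but unavoidable computation, whose outcome is precisely what is encoded in the coefficients $\frac{1}{7}(5\dev\mathbf{d}-4\dev\mathbf{v})$ and $\frac{1}{7}(-2\dev\mathbf{d}+3\dev\mathbf{v})$ quoted in the paper's remark. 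Without carrying it out, the claim that the partial traces of $\bD$ vanish ``tautologically'' begs the question. Note finally that your argument takes \eqref{eq:elastic-harmonic-decomposition} as given, so it establishes the explicit form of the decomposition rather than an independent derivation of it; this is consistent with how the paper itself treats the statement.
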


\begin{rem}
  It is worth noting that if several identical factors, isomorphic to the same $\HT{n}$, appear in the decomposition of $V$, the explicit isomorphism that realizes this decomposition is not uniquely defined. In the case of the elasticity tensor, any invertible linear combination of $\ba$ and $\bb$ or $\lambda$ and $\mu$ lead to another irreducible decomposition of the elasticity tensor. There is thus an ambiguity in the choice of the two numbers $\lambda,\mu$ and the two deviators $\ba, \bb$. For instance, one can decide that $\lambda,\mu$ are the the \emph{Lam\'{e} numbers}, but one could also use the \emph{shear} $G$ and the \emph{bulk modulus} $K$, which are related to the former by the linear relations
  \begin{equation*}
    G = \mu, \qquad K = \lambda + \frac{2}{3}\mu.
  \end{equation*}
  Concerning the two deviators $\ba,\bb$, one could decide to use the deviatoric part of the \emph{dilatation} tensor $\mathbf{d}$ and the \emph{Voigt} tensor $\mathbf{v}$ (see~\cite{CM1987,Cow1989}), defined respectively as
  \begin{equation*}
    d_{ij} := \sum_{k=1}^{3} C_{kkij}, \quad
    v_{ij} := \sum_{k=1}^{3} C_{kikj}.
  \end{equation*}
  One could also decide to use their following linear combinations as in~\cite{BKO1994}:
  \begin{equation*}
    \ba = \frac{1}{7}( 5\dev \mathbf{d} - 4 \dev \mathbf{v}) ,\quad  \bb = \frac{1}{7} ( -2 \dev \mathbf{d} + 3  \dev \mathbf{v}),
  \end{equation*}
  where $\dev$ indicates the traceless part of a second order symmetric tensor. Nevertheless, all the polynomial invariants formula given in section~\ref{sec:explicit-computations} are independent of these choices.
\end{rem}
Explicit and, due to the aforementioned remark, sometimes different decompositions associated to~\eqref{eq:elastic-harmonic-decomposition} are provided in~\cite{Bac1970,Ona1984,BKO1994,FV1996,FV2006}.

% ----------------------------------------------------------------
\section{Polynomial invariants}
\label{sec:polynomial-invariants}
% ----------------------------------------------------------------

Let us now recall some general facts about finite dimensional representations of $\SO(3,\RR)$ that we shall apply to the space $\Ela$. Let $V$ be a finite-dimensional (real) linear representation of $\SO(3,\RR)$. The action will be denoted by $\vv \mapsto g\cdot \vv$, where $g \in \SO(3,\RR)$, and $\vv \in V$. This action can be extended to the algebra $\RR[V]$ of polynomial functions defined on $V$ by the following rule:
\begin{equation*}
  (g\cdot P)(\vv) := P(g^{-1} \cdot \vv)
\end{equation*}
where $P \in \RR[V]$, $\vv \in V$ and $g \in \SO(3,\RR)$. The set of all \emph{invariant polynomials} is a sub-algebra of $\RR[V]$, denoted by $\RR[V]^{\SO(3,\RR)}$ and called the \emph{invariant algebra}. This algebra is moreover \emph{finitely generated}~\cite{Hil1993}. Since moreover, the algebra $\RR[V]$ is the direct sum of spaces of homogeneous polynomials of fixed total degree and that the action of $\SO(3,\RR)$ preserves this graduation, we can always find a generating set of $\RR[V]^{\SO(3,\RR)}$ build up from homogeneous polynomials (see~\cite[Page 227]{GW2009}).

\begin{defn}
  A finite set $\set{J_{1}, \dotsc , J_{N}}$ of invariant homogeneous polynomials on $V$ is called an \emph{integrity basis} if every invariant polynomial on $V$ can be written as a polynomial in $J_{1}, \dotsc , J_{N}$. An integrity basis is said to be \textit{minimal} if none of its elements can be expressed as a polynomial on the others.
\end{defn}

\begin{rem}
  Even if a minimal integrity basis is not unique, all of them have the same cardinality and the list of the degrees of the generators must be the same~\cite{Spe1971,GW2009}.
\end{rem}

\begin{rem}
  It is also worth noting that this definition does not exclude that some polynomial relations, known as \emph{syzygies}, may exist between generators. Such relations can not be avoided in most cases and their determination is a difficult problem \cite{Shi1967,AKP2014a}.
\end{rem}

An important property of polynomial invariants for a real representation of a compact group (and thus of any integrity basis), attributed to Weyl~\cite{Wey1997} (see also~\cite[Appendix C]{AS1983}), is that they \emph{separate the orbits}, which means that:
\begin{equation*}
  P(\vv_{1}) = P(\vv_{2}),\qquad \forall P \in \RR[V]^{\SO(3,\RR)},
\end{equation*}
if and only if $\vv_{1} = g\cdot \vv_{2}$ for some $g\in \SO(3,\RR)$.

\begin{ex}
  For instance, two vectors $\vv,\ww \in \RR^{3}$ are rotates of each other if and only if they have the same norms $\norm{\vv} = \norm{\ww}$. In fact, the invariant algebra in this case is generated by the single homogeneous polynomial $P(\vv) := \norm{\vv}^{2}$.
\end{ex}

We could omit the condition of polynomiality and obtain a more general definition of an invariant function on $V$; for example, one may consider rational, smooth, continuous, \ldots , invariant functions. In this general framework, we are lead to the following definition.

\begin{defn}
  A finite set $\mathcal{F}=\set{s_{1},\dotsc,s_{n}}$ of invariant functions on $V$ is called a \emph{functional basis} if $\mathcal{F}$ separates the orbits. A functional basis is \emph{minimal} if no proper subset of it is a functional basis.
\end{defn}

\begin{rem}
  Note that a polynomial functional basis may not be an integrity basis. Consider, for instance, the space, of symmetric, traceless second-order tensors on $\RR^{3}$. An integrity basis, for the action of $\SO(3,\RR)$, is constituted by the polynomial invariants $I_{2}(\ba) = \tr(\ba^{2})$, and $I_{3}(\ba) = \tr(\ba^{3})$. Since $\tr(\ba^{2})>0$, the set $\set{I_{2}^{2}, I_{3}}$ is a functional basis but it is not an integrity basis.
\end{rem}

\begin{rem}
  Contrary to an integrity basis, there is no reason that two polynomial minimal functional bases have the same cardinal number. Moreover, there is no known algorithm to determine the minimum cardinal number of a polynomial separating set. It is not even easy to check if a given functional basis is minimal or not.
\end{rem}

Many results are known on invariants for an arbitrary number of vectors, skew and symmetric second-order tensors~\cite{Liu1982,Boe1987,Zhe1994}. Some of them concerns integrity bases~\cite{Riv1955,SR1958/1959,SR1962}, others functional bases~\cite{Smi1971,Wan1970}. According to them, it is possible to find \emph{polynomial functional basis} with a smaller cardinality than that of a \emph{minimal integrity basis}. For instance, the cardinality of a minimal integrity basis for the action of $\SO(3,\RR)$ on the direct sum of 3 second-order symmetric tensors is 28, but there exists a functional basis (which do not generate the invariant algebra) consisting of 22 polynomial invariants~\cite{Boe1977}. However, no general algorithm currently exists to produce a minimal functional basis, whereas there are algorithms to compute a minimal \emph{integrity basis}. For higher-order tensors, results are rather un-complete and restricted to particular cases. The reason lies in the fact that classical geometrical methods used for low-order tensors cease to work for tensors of order $\ge 3$. Even if not directly formulated in these terms, this point seems to have been clear to some authors in this field~\cite{BKO1994,Smi1994,SB1997}.

Getting back to the elasticity tensor, the harmonic decomposition~\eqref{eq:explicit-elasticity-decomposition}, allows to consider an invariant function of $\bC$ as a function of the variables $(\lambda, \mu, \ba, \bb, \bD)$. An invariant which depends only on one of the variables is called a \emph{simple invariant} and an invariant which depends on two or more of them is called a \emph{joint invariant}. For instance, the scalars $\lambda, \mu$ are simple invariants, $\tr(\ba^{2})$ and $\tr(\ba^{3})$ are simple invariants which generate the invariant algebra of $\HT{2}$, and similarly for $\bb$. In~\cite{BKO1994}, Boehler, Onat and Kirillov exhibited for the first time, using previous calculations by Shioda~\cite{Shi1967} and von Gall~\cite{vGal1880}, nine simple invariants of $\bD$ which generate the invariant algebra of $\HT{4}$.

\begin{prop}\label{prop:inv-H4}
  Let $\bD \in \HT{4}$ and set:
  \begin{equation*}
    \begin{array} {lll}
      \mathbf{d}_{2} := \tr_{13} \bD^{2},                            & \mathbf{d}_{3} := \tr_{13} \bD^{3},                            & \mathbf{d}_{4} := \mathbf{d}_{2}^{2},                               \\
      \mathbf{d}_{5} := \mathbf{d}_{2}(\bD \mathbf{d}_{2}),          & \mathbf{d}_{6} := \mathbf{d}_{2}^{3},                          & \mathbf{d}_{7} := \mathbf{d}_{2}^{2} (\bD \mathbf{d}_{2})           \\
      \mathbf{d}_{8} := \mathbf{d}_{2}^{2}( \bD^{2} \mathbf{d}_{2}), & \mathbf{d}_{9} := \mathbf{d}_{2}^{2}( \bD \mathbf{d}_{2}^{2}), & \mathbf{d}_{10} := \mathbf{d}_{2}^{2} (\bD^{2} \mathbf{d}_{2}^{2}).
    \end{array}
  \end{equation*}
  An integrity basis of $\HT{4}$ is given by the nine fundamental invariants:
  \begin{equation*}
    J_{k} := \tr \mathbf{d}_{k} , \qquad k = 2, \dotsc ,10.
  \end{equation*}
\end{prop}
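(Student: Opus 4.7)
The plan is to translate the problem to classical invariant theory via the $\SO(3,\RR)$-equivariant isomorphism $\HT{4}(\RR^{3}) \simeq \Sn{8}$ between harmonic tensors of order $4$ and binary forms of degree $8$ (the Cartan map, whose construction is treated in the paper's appendix). This isomorphism identifies the invariant algebra $\RR[\HT{4}]^{\SO(3,\RR)}$ with the classical invariant algebra $\CC[\Sn{8}]^{\SL(2,\CC)}$ of binary octavics (after base change to $\CC$). In this language, the proposition becomes a statement about generators of the invariant algebra of binary octavics.

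First, I would invoke the classical theorem of von Gall~\cite{vGal1880}, later clarified and corrected by Shioda~\cite{Shi1967}, which asserts that $\CC[\Sn{8}]^{\SL(2,\CC)}$ is generated by nine fundamental invariants of degrees $2, 3, 4, 5, 6, 7, 8, 9, 10$, and that no generating set with fewer elements (or with different degree list) exists. Next, I would check that each $J_{k}=\tr \mathbf{d}_{k}$ is in fact an $\SO(3,\RR)$-invariant polynomial on $\HT{4}$ of degree exactly $k$: invariance is immediate because every contraction appearing in $\mathbf{d}_{k}$ is $\SO(3,\RR)$-equivariant, and the degree in $\bD$ equals the number of copies of $\bD$ used. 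This exhibits nine candidate generators with the correct degree profile.

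The remaining, and principal, task is to prove that $\{J_{2},\ldots,J_{10}\}$ actually generates the invariant algebra, rather than merely lying inside it. Equivalently, for each $k\in\{2,\ldots,10\}$, the invariant $J_{k}$ must be linearly independent, modulo the subalgebra generated by $J_{2},\ldots,J_{k-1}$, from the space of polynomials of degree $k$ built from the lower-degree generators. I would do this by transporting each $J_{k}$ through the Cartan isomorphism to an explicit $\SL(2,\CC)$-invariant of a binary octavic, expressed as an iterated transvectant, and then comparing with the fundamental invariants produced by von Gall and Shioda. A convenient cross-check is provided by the Hilbert (Poincaré) series of $\CC[\Sn{8}]^{\SL(2,\CC)}$, which can be computed via Molien's formula and whose low-degree coefficients match the number of algebraically independent invariants in each degree; one verifies that in each degree $k\leq 10$, $J_{k}$ contributes a new dimension not covered by products of lower-degree $J_{i}$'s.

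The main obstacle is precisely this last step: although the existence of a $9$-element minimal integrity basis with the listed degrees is classical, showing that the specific tensor contractions $J_{k}$ (as opposed to some other choice of invariants of the same degrees) do form a generating set requires a genuine algebraic verification. In practice one either identifies each $J_{k}$, up to a polynomial in $J_{2},\ldots,J_{k-1}$, with one of the classical generators of $\CC[\Sn{8}]^{\SL(2,\CC)}$, or carries out a direct symbolic evaluation on a sufficiently generic harmonic tensor $\bD$ to rule out hidden polynomial dependencies. Once this is achieved, minimality follows automatically from the fact that any minimal integrity basis of $\CC[\Sn{8}]^{\SL(2,\CC)}$ must contain exactly nine elements with the degree sequence $2,3,\ldots,10$.
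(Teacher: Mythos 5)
Your proposal is correct and follows essentially the same route the paper relies on: the paper states this proposition without proof, attributing it to Boehler, Kirillov and Onat~\cite{BKO1994}, who obtained it exactly as you describe, by transporting the problem to binary octavics via the equivariant isomorphism $\HT{4}\simeq\Sn{8}$ and matching the nine tensor contractions $J_{2},\dotsc,J_{10}$ against the classical generators of degrees $2,\dotsc,10$ due to von~Gall~\cite{vGal1880} and Shioda~\cite{Shi1967} (the paper later identifies them with the order-zero covariants $\ff_{2},\ff_{6},\ff_{14},\ff_{24},\ff_{35},\ff_{44},\ff_{52},\ff_{59},\ff_{64}$ of Table~\ref{tab:Cov-S8}). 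Your acknowledgement that the real substance lies in verifying that these specific contractions, and not merely some invariants of the right degrees, generate the algebra is exactly right, and the passage from the complex $\SL(2,\CC)$ statement back to the real $\SO(3,\RR)$ one is supplied by Lemma~\ref{lem:Real_Integrity_Basis}.
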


\begin{rem}
  The first 6 invariants $J_{2}, \dotsc , J_{7}$ are algebraically independent, whereas the last 3 ones $J_{8},J_{9},J_{10}$ are linked to the formers by polynomial relations. These \emph{relations} were computed in~\cite{Shi1967}.
\end{rem}

To obtain an integrity basis of the elasticity tensor, it is necessary to complete these results by including \emph{joint invariants} of $\ba, \bb, \bD$. For instance, a minimal integrity basis for $\HT{2}\oplus\HT{2}$ is known~\cite{Smi1965,You1898}.

\begin{prop}\label{prop:inv-H2H2}
  An integrity basis of $\HT{2}\oplus\HT{2}$ is given by the eight fundamental invariants:
  \begin{align*}
    I_2 & :=\tr(\ba^{2}), \quad I_3 :=\tr(\ba^{3}), \quad J_2:=\tr(\bb^{2}), \quad J_3:=\tr(\bb^{3})              \\
    K_2 & :=\tr(\ba\bb), \quad K_3:=\tr(\ba^{2}\bb), \quad L_3:=\tr(\ba\bb^{2}), \quad K_4 :=\tr(\ba^{2}\bb^{2}). \\
  \end{align*}
\end{prop}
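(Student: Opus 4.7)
The plan is to reduce the problem to classical invariant theory of binary forms via the Cartan map. The Cartan map (cf.\ the Cartan map appendix of the paper) provides an $\SO(3,\RR)$-equivariant isomorphism $\HT{2}(\RR^{3}) \simeq \Sn{4}$, so a pair of deviators $(\ba,\bb)$ corresponds to a pair of binary quartics $(\ff,\bg)$, and any $\SO(3,\RR)$-polynomial invariant of $(\ba,\bb)$ pulls back to an $\SL(2,\CC)$-joint invariant of $(\ff,\bg)$. The question is thus reduced to computing the joint invariants of two binary quartics, a classical problem whose solution goes back to Gordan, von Gall, and Young.

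By those classical results, the joint invariant algebra of two binary quartics is generated by the two simple invariants of each quartic (in degrees $2$ and $3$, which pull back to $I_{2}, I_{3}$ and $J_{2}, J_{3}$), together with the four transvectants $(\ff,\bg)^{4}$, $(\ff^{2},\bg)^{4}$, $(\ff,\bg^{2})^{4}$, $(\ff^{2},\bg^{2})^{4}$, of bidegrees $(1,1), (2,1), (1,2), (2,2)$. A direct calculation with an explicit realization of the Cartan map identifies these four transvectants, up to non-zero scalars, with $K_{2}, K_{3}, L_{3}, K_{4}$ respectively. This yields completeness. Minimality then follows by matching the Molien series of $\RR[\HT{2} \oplus \HT{2}]^{\SO(3,\RR)}$ (computable via the Weyl integration formula on $\SO(3,\RR)$) against the Hilbert series of a polynomial ring on eight generators of the stated degrees, corrected for the single known syzygy in low degree; the degrees and number of generators coincide, so no element of the list is redundant.

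The main obstacle is verifying that Gordan's algorithm actually terminates with exactly these four joint transvectants, i.e.\ that every further iterated transvectant of the form $(\ff^{i}\bg^{j},\ff^{k}\bg^{l})^{r}$ can be reduced to a polynomial in the eight listed generators using Gordan's series of reduction identities. As an alternative, more elementary route bypassing transvectant bookkeeping, one can argue directly in the tensorial setting: Weyl's first fundamental theorem for $\OO(3,\RR)$ writes any polynomial invariant of $(\ba,\bb)$ as a polynomial in the traces $\tr(w(\ba,\bb))$ of noncommutative words $w$ in $\ba,\bb$. For traceless $3\times 3$ matrices, Cayley--Hamilton gives $\ba^{3} = \tfrac{1}{2}\tr(\ba^{2})\,\ba + \tfrac{1}{3}\tr(\ba^{3})\,I$, and polarizing via $\ba \mapsto \ba + t\bb$ and extracting coefficients of powers of $t$ produces the relations required to reduce every word of length $\geq 5$. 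The only genuinely new cross-term is $\tr(\ba\bb\ba\bb)$, which is handled by the identity $\tr(\ba\bb\ba\bb) = 2\tr(\ba^{2}\bb^{2}) - \tr([\ba,\bb]^{2})$ together with the polarized Cayley--Hamilton relations expanded at $\ba + t\bb$. Either route converges on the same eight-element minimal integrity basis.
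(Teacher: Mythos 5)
Your overall architecture is sound, and in fact the paper offers no proof of this proposition at all: it simply cites Smith (1965) and Young (1898), which correspond exactly to your two routes (traces of matrix words reduced by Cayley--Hamilton, and joint covariants of two binary quartics). However, both routes as you have written them contain concrete errors. In the binary-form route, the transvectants you list in bidegrees $(2,1)$, $(1,2)$ and $(2,2)$, namely $\trans{\ff^{2}}{\bg}{4}$, $\trans{\ff}{\bg^{2}}{4}$ and $\trans{\ff^{2}}{\bg^{2}}{4}$, are not invariants: a transvectant of index $r$ of forms of orders $n$ and $p$ has order $n+p-2r$, so these have orders $4$, $4$ and $8$ respectively, and no transvectant of $\ff^{2}$ (order $8$) with $\bg$ (order $4$) can reach order $0$. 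The step ``identify these four transvectants with $K_{3},L_{3},K_{4}$'' therefore fails as stated. The correct generators are built from the Hessians: $\trans{\bg}{\trans{\ff}{\ff}{2}}{4}$, $\trans{\ff}{\trans{\bg}{\bg}{2}}{4}$ and $\trans{\trans{\ff}{\ff}{2}}{\trans{\bg}{\bg}{2}}{4}$, i.e.\ the order-zero covariants $\hh_{14}$, $\hh_{13}$, $\hh_{23}$ of Table~\ref{tab:Cov-S4S4}. You would also need the descent argument of Lemma~\ref{lem:Real_Integrity_Basis} to pass from a minimal integrity basis of the complex $\SL(2,\CC)$-module $\Sn{4}\oplus\Sn{4}$ to one of the real $\SO(3,\RR)$-module $\HT{2}\oplus\HT{2}$; you assert the pull-back without addressing reality or minimality over $\RR$.

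In the tensorial route the strategy is correct, but your key identity is not: expanding $[\ba,\bb]^{2}$ and using cyclicity of the trace gives $\tr([\ba,\bb]^{2}) = 2\tr(\ba\bb\ba\bb) - 2\tr(\ba^{2}\bb^{2})$, hence $\tr(\ba\bb\ba\bb) = \tr(\ba^{2}\bb^{2}) + \tfrac{1}{2}\tr([\ba,\bb]^{2})$, not $2\tr(\ba^{2}\bb^{2}) - \tr([\ba,\bb]^{2})$. More importantly, this identity alone does not close the argument, since $\tr([\ba,\bb]^{2})$ is itself a bidegree-$(2,2)$ invariant that still has to be expressed in the eight generators; the actual reduction comes from the fully polarized Cayley--Hamilton identity (multiply $\ba^{2}\bb+\ba\bb\ba+\bb\ba^{2}=(\text{lower-order terms})$ by $\bb$ and take traces), which you invoke but do not carry out. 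Finally, the Molien-series bookkeeping for minimality is both heavier than necessary and under-specified (you would need the degree of the syzygy to ``correct'' the Hilbert series): since all candidate generators have degree $2$, $3$ or $4$ and there are no invariants of degree $1$, irredundancy reduces to the linear independence of the degree-$2$ and degree-$3$ generators and to checking that $\tr(\ba^{2}\bb^{2})$ is not a quadratic polynomial in $I_{2},J_{2},K_{2}$, which is immediate on examples.
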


In \cite{BKO1994}, the authors tried to compute all joint invariants but realized that running classical algorithms by hand would be \emph{prohibitively long}. They nevertheless formulate a \emph{generic} hypothesis on  $\bD$ which results in a \emph{weak} functional basis constituted by 39 polynomial invariants able to separate \emph{generic tensors}. As pointed by the authors themselves this hypothesis, which only concerns a subset of triclinic materials, is not satisfactory. In the present work, the combination of non-trivial tools from classical invariant theory (described in the next sections) with the use of a \emph{Computer Algebra System} (CAS) software allows us to conduct the complete computation leading to the following result.

\begin{thm}\label{thm:main-result}
  The polynomial invariant algebra of $\Ela$ is generated by a minimal basis of 297 homogeneous invariant polynomials, resumed
  in table~\ref{tab:Inv844}, which describes the number and the total degree of simple and joint invariants of this basis.
\end{thm}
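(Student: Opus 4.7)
The plan is to reduce the problem, via the harmonic decomposition of Proposition~\ref{prop:boehler-decomposition}, to a question about joint invariants of the triple $(\ba, \bb, \bD) \in \HT{2} \oplus \HT{2} \oplus \HT{4}$, then transfer it to classical invariant theory of binary forms and apply Gordan's algorithm. The two scalar factors $\lambda$ and $\mu$ contribute themselves as invariants of degree one, so they can be set aside: any polynomial invariant of $\bC$ is a polynomial in $\lambda$, $\mu$ and the generators of $\RR[\HT{2} \oplus \HT{2} \oplus \HT{4}]^{\SO(3,\RR)}$. The known simple invariants (the nine generators $J_{2},\dotsc,J_{10}$ of $\RR[\HT{4}]^{\SO(3,\RR)}$ from Proposition~\ref{prop:inv-H4} and the Smith--Young basis of $\RR[\HT{2}\oplus \HT{2}]^{\SO(3,\RR)}$ from Proposition~\ref{prop:inv-H2H2}) will account for part of the 297 generators; everything else must come from the genuinely new joint invariants involving $\bD$ together with $\ba$ and/or $\bb$.

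The next step is to use the equivariant isomorphism between harmonic tensors of order $n$ on $\RR^{3}$ and binary forms of degree $2n$ on $\CC^{2}$ (the Cartan map, recalled in the announced appendix) to identify $\HT{2}$ with $\Sn{4}$ and $\HT{4}$ with $\Sn{8}$. Under this identification, the $\SO(3,\RR)$-action becomes the natural $\SL(2,\CC)$-action on binary forms, so the ring $\RR[\HT{2} \oplus \HT{2} \oplus \HT{4}]^{\SO(3,\RR)}$ matches the real locus of the $\SL(2,\CC)$-invariant ring of $\Sn{4} \oplus \Sn{4} \oplus \Sn{8}$. At this point the problem is purely one of classical invariant theory of a reducible binary-form system, for which the existence of a finite generating set is guaranteed by Hilbert's theorem and an effective generating procedure is provided by Gordan's algorithm via iterated transvectants.

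The main step is then an induction on the number of factors, as in the Gordan--Young scheme. I would first produce a complete system of covariants for the single binary octic corresponding to $\bD$, then combine it with the covariant system of the first quartic corresponding to $\ba$, producing a finite covariant basis for $\Sn{4} \oplus \Sn{8}$; finally I would couple in the second quartic corresponding to $\bb$. At every stage Gordan's bound controls the degrees of new transvectants that must be tried, so the algorithm is effective; nonetheless the combinatorial explosion is the real obstacle and is what necessitates a computer algebra system. After each stage one applies linear algebra in each bidegree to discard covariants which are polynomial combinations of already-produced ones; the invariants then appear as the degree-zero (in the binary-form variables) part of the final covariant basis.

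The last step is to translate the surviving invariants back through the Cartan map into explicit polynomial expressions in $\ba,\bb,\bD$ (traces of iterated tensor contractions), then prove minimality by verifying, degree by degree, that no element of the produced basis is a polynomial in the others. Concretely, for each total degree one assembles all monomials in the remaining generators of that degree and checks the rank of their evaluation on sufficiently generic $(\ba,\bb,\bD)$; if dropping a candidate strictly reduces the rank, it is irredundant. Counting the surviving generators in each bidegree reproduces the numbers displayed in Table~\ref{tab:Inv844} and sums to the announced $297$. The minimality statement, together with the uniqueness of the multidegree distribution recalled in the remark after Definition~2 of an integrity basis, then yields the theorem.
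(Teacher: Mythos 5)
Your proposal follows essentially the same route as the paper: harmonic decomposition into $(\lambda,\mu,\ba,\bb,\bD)$, transfer to the binary forms $\Sn{4}\oplus\Sn{4}\oplus\Sn{8}$ via the Cartan map, Gordan's algorithm on iterated transvectants carried out with a computer algebra system, and a degree-by-degree minimality reduction. The one point you gloss over, which the paper isolates as Lemma~\ref{lem:Real_Integrity_Basis}, is why a minimal integrity basis of the \emph{complex} $\SL(2,\CC)$-representation descends to a minimal integrity basis of the \emph{real} $\SO(3,\RR)$-representation; this rests on the fact that transvectants preserve the reality condition~\eqref{eq:real-binary-forms-functional} and that invariance and generation transfer between $\RR[W]^{\SO(3,\RR)}$ and $\CC[W^{\CC}]^{\SO(3,\CC)}$.
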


\begin{table}[h]
  \setlength{\arraycolsep}{6pt}
  \begin{equation*}
    \begin{array}{c||cccccc|c}
      \text{degree} & \HT{4} & \HT{2}     & \HT{0}     & \HT{2}\oplus \HT{2} & \HT{4}\oplus \HT{2} & \HT{4}\oplus \HT{2}\oplus \HT{2} & \Sigma \\\hline\hline
      1             & -      & -          & 1          & -                   & -                   & -                                & 2      \\
      2             & 1      & 1          & -          & 1                   & -                   & -                                & 4      \\
      3             & 1      & 1          & -          & 2                   & 2                   & 1                                & 10     \\
      4             & 1      & -          & -          & 1                   & 4                   & 6                                & 16     \\
      5             & 1      & -          & -          & -                   & 7                   & 18                               & 33     \\
      6             & 1      & -          & -          & -                   & 10                  & 36                               & 57     \\
      7             & 1      & -          & -          & -                   & 11                  & 53                               & 76     \\
      8             & 1      & -          & -          & -                   & 10                  & 45                               & 66     \\
      9             & 1      & -          & -          & -                   & 5                   & 10                               & 21     \\
      10            & 1      & -          & -          & -                   & 2                   & 2                                & 7      \\
      11            & -      & -          & -          & -                   & 1                   & 3                                & 5      \\ \hline
      Tot           & 9      & 2 \times 2 & 2 \times 1 & 4                   & 2 \times 52         & 174                              & 297    \\
    \end{array}
  \end{equation*}
  \caption{Minimal integrity basis for the elasticity tensor.}
  \label{tab:Inv844}
\end{table}

It can be observed that the number of elementary invariants of each degree provided by our theorem confirms some previously published results \cite{BH1995,Ahm2002,Nor2007,Iti2015}.

% ----------------------------------------------------------------
\section{Complexification of the problem}
\label{sec:harmonic-tensor-versus-binary-forms}
% ----------------------------------------------------------------

Before entering the details of computations for the invariants in 3D, let us recall first the situation in 2D. To compute an integrity basis for a real representation $V$ of the rotation group $\SO(2,\RR)$, $V$ is first split into irreducible representations \cite{AKO2016}
\begin{equation*}
  V_{1} \oplus \dotsb \oplus V_{r}.
\end{equation*}
It is also useful to complexify the problem, which means extending the representation to the complexified space $V^{\CC} := V \oplus iV$.
Now each irreducible complex representation of $\SO(2,\RR)$ is one-dimensional, indexed by $n \in \ZZ$, and represented by
\begin{equation*}
  \rho_{n}(\theta) \cdot z := e^{in\theta}z,
\end{equation*}
where $\theta\in \SO(2,\RR)$ and $z \in \CC$. Let $\CC_{n}$ denote the representation $(\CC,\rho_{n})$. Then, for each real representation $V$ of $\SO(2,\RR)$, the complexified space $V^{\CC}$ is isomorphic to
\begin{equation*}
  \CC_{m_{1}} \oplus \dotsb \oplus \CC_{m_{r}} \oplus \CC_{-m_{1}} \oplus \dotsb \oplus \CC_{-m_{r}}.
\end{equation*}
The monomials
\begin{equation*}
  z_{1}^{\alpha_{1}} \dotsm z_{r}^{\alpha_{r}} \bar{z}_{1}^{\beta_{1}} \dotsm \bar{z}_{r}^{\beta_{r}}
\end{equation*}
span stable one-dimensional subspaces of $\CC[V^{\CC}]$ and the invariant algebra of $V^{\CC}$ is generated by the monomials which satisfy the \emph{Diophantine equation}
\begin{equation}\label{eq:diophantine-equation}
  m_{1}\alpha_{1} + \dotsb + m_{r}\alpha_{r} - m_{1}\beta_{1} - \dotsb - m_{r}\beta_{r} = 0,
\end{equation}
where $(\boldsymbol{\alpha},\boldsymbol{\beta}) := (\alpha_{1},\dotsb,\alpha_{r},\beta_{1},\dotsb, \beta_{r}) \in \NN^{2r}$. A solution $(\boldsymbol{\alpha},\boldsymbol{\beta})$ is called \emph{irreducible} if it is not the sum of two non--trivial solutions. It is, by the way, a classical result \cite{Stu2008} that there is only a finite number of irreducible solutions of~\eqref{eq:diophantine-equation}. Moreover, there exists algorithms to compute them \cite{BI2010}. Thus, an integrity basis of the invariant algebra of $V^{\CC}$ is given by monomials corresponding to irreducible solutions of the Diophantine equation~\eqref{eq:diophantine-equation}. Following a work of Pierce \cite{Pie1995}, this approach was applied to plane elasticity by Vianello \cite{Via1997,FV2014} and in a related way by Verchery some years before \cite{Ver1982}.

In 3D, the scheme is more or less similar but the complexification process is much more sophisticated. Complex irreducible representations of $\SO(3,\RR)$ are no more \emph{one-dimensional} and the description of polynomial invariants requires additional tools. First, it is preferable to use the space $\Hn{n}(\RR^{3})$ (of harmonic polynomials on $\RR^{3}$) as a model for irreducible representations, rather than the space $\HT{n}(\RR^{3})$ of harmonic tensors (see Appendix~\ref{sec:harmonic-polynomials}). Then, each irreducible representation $\Hn{n}(\RR^{3})$ of the real group $\SO(3,\RR)$ can be complexified to obtain an irreducible representation $\Hn{n}(\CC^{3})$ of the complex group $\SO(3,\CC)$, where $\Hn{n}(\CC^{3})$ is the space of complex harmonic, homogeneous polynomials in three variables of degree $n$. This space is closely related to the space of \emph{binary forms} (\textit{i.e.} homogeneous complex polynomials in two variables) of degree $2n$. The object of this section is to describe explicitly this relationship, which is obtained using the \emph{Cartan map} (see Appendix~\ref{sec:Cartan-map}). Albeit being rather confidential in the field of continuum mechanics, this approach has been explored in some publications \cite{Bac1970,Bae1993,BBS2004b}.

Using the universal cover $\pi: \SL(2,\CC) \to \SO(3,\CC)$, described in Appendix~\ref{sec:Cartan-map},  the $\SO(3,\CC)$-representation $\Hn{n}(\CC^{3})$ can be extended into an $\SL(2,\CC)$ representation, writing
\begin{equation*}
  \gamma \cdot \bh := \pi(\gamma) \cdot \bh, \qquad \gamma \in \SL(2,\CC),\quad \bh \in \Hn{n}(\CC^{3}),
\end{equation*}
which remains irreducible. Finite-dimensional irreducible representations of $\SL(2,\CC)$ are all known~\cite{Ste1994}. They correspond to the spaces $\Sn{p}$ of binary forms of degree $p$
\begin{equation*}
  \ff(u,v) := \sum_{k=0}^{p} \binom{p}{k}a_{k}u^{k}v^{p-k},
\end{equation*}
where the action of $\SL(2,\CC)$ is defined as
\begin{equation*}
  (\gamma \cdot \ff) (\bxi) : = \ff(\gamma^{-1} \cdot \bxi), \qquad \gamma \in \SL(2,\CC), \quad \bxi \in \CC^{2},
\end{equation*}
and $\gamma \cdot \bxi$ is the standard action of $\SL(2,\CC)$ on $\CC^{2}$. For dimensional reason, there must exist an equivariant isomorphism between $\Hn{n}(\CC^{3})$ and $\Sn{2n}$ which by the Schur's lemma, is unique up to a multiplicative factor. Such an isomorphism is provided explicitly using the \emph{Cartan map}:
\begin{equation*}
  \phi : \CC^{2} \to \CC^{3}, \qquad (u,v) \mapsto \left(\frac{u^{2} - v^{2}}{2},\frac{u^{2} + v^{2}}{2i},uv\right).
\end{equation*}
The geometric meaning of this mapping and its properties are detailed in Appendix~\ref{sec:Cartan-map}.

\begin{thm}\label{thm:explicit-isomorphism}
  The linear mapping
  \begin{equation*}
    \phi^{*} : \Hn{n}(\CC^{3}) \to \Sn{2n},\qquad \bh \mapsto \phi^{*} \bh := \bh \circ \phi,
  \end{equation*}
  where
  \begin{equation*}
    (\phi^{*} \bh)(u,v) = \bh \left( \frac{u^{2}-v^{2}}{2}, \frac{u^{2}+v^{2}}{2i}, uv \right)
  \end{equation*}
  is an $\SL(2,\CC)$-equivariant isomorphism.
\end{thm}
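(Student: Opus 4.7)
The plan is to verify three properties of $\phi^{*}$ — well-definedness into $\Sn{2n}$, $\SL(2,\CC)$-equivariance, and nonvanishing — and then to invoke Schur's lemma exactly as foreshadowed in the paragraph preceding the statement, where it is already noted that an equivariant isomorphism $\Hn{n}(\CC^{3}) \to \Sn{2n}$ must exist and is unique up to scalar. Once that is in hand, the theorem reduces to pinning down the particular map $\phi^{*}$ as a representative of this canonical line.

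First I would check well-definedness. Each of the three components of the Cartan map $\phi$ is homogeneous of degree $2$ in $(u,v)$, so for any $\bh \in \Hn{n}(\CC^{3})$ the composition $\phi^{*}\bh = \bh\circ\phi$ is automatically a homogeneous polynomial of degree $2n$ in $(u,v)$, hence an element of $\Sn{2n}$. Linearity of $\bh \mapsto \phi^{*}\bh$ is evident. Next I would establish equivariance. The essential property of the Cartan map, which I will take from Appendix~\ref{sec:Cartan-map}, is the intertwining identity $\phi(\gamma\cdot\bxi) = \pi(\gamma)\cdot\phi(\bxi)$ for all $\gamma \in \SL(2,\CC)$ and $\bxi\in\CC^{2}$. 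Inserting this in the defining formulas yields
\begin{equation*}
  \phi^{*}(\gamma\cdot\bh)(\bxi) = (\pi(\gamma)\cdot\bh)(\phi(\bxi)) = \bh\bigl(\phi(\gamma^{-1}\cdot\bxi)\bigr) = (\phi^{*}\bh)(\gamma^{-1}\cdot\bxi) = (\gamma\cdot\phi^{*}\bh)(\bxi),
\end{equation*}
which is exactly the required $\SL(2,\CC)$-equivariance.

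Finally I would exhibit a single $\bh$ with $\phi^{*}\bh \neq 0$ and conclude by Schur. The simplest witness is $\bh(x,y,z) := (x-iy)^{n}$: it is harmonic because $\partial_{x}^{2}\bh + \partial_{y}^{2}\bh = n(n-1)(x-iy)^{n-2}(1+(-i)^{2}) = 0$, and a direct substitution gives
\begin{equation*}
  (\phi^{*}\bh)(u,v) = \Bigl(\tfrac{u^{2}-v^{2}}{2} - i\cdot\tfrac{u^{2}+v^{2}}{2i}\Bigr)^{n} = (-v^{2})^{n} \neq 0.
\end{equation*}
Because both $\Hn{n}(\CC^{3})$ and $\Sn{2n}$ are $(2n{+}1)$-dimensional irreducible $\SL(2,\CC)$-modules — the source by the irreducibility recalled explicitly just above the theorem, the target by the classical classification of $\SL(2,\CC)$-irreducibles — Schur's lemma forces the nonzero equivariant map $\phi^{*}$ to be an isomorphism. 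I do not anticipate a real obstacle in this proof: the only nontrivial input beyond bookkeeping is the equivariance of $\phi$ itself, which is deferred to the appendix, and once that is granted everything else is short and formal.
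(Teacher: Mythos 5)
Your proof is correct, but it establishes bijectivity by a genuinely different route than the paper. For equivariance you do exactly what the paper does: reduce to the intertwining property $\phi(\gamma\cdot\bxi)=\Ad_{\gamma}\phi(\bxi)$ of Lemma~\ref{lem:Cartan-equiv}. For bijectivity, however, the paper argues constructively: given $\ff\in\Sn{2n}$ it writes down an explicit homogeneous $\bp\in\CC_{n}[\CC^{3}]$ with $\phi^{*}\bp=\ff$ via a monomial substitution, and then replaces $\bp$ by its harmonic component $\bh_{0}$ from the decomposition of Lemma~\ref{lem:Proj_Harmonique}, using that $\qq=x^{2}+y^{2}+z^{2}$ vanishes on the isotropic cone (the image of $\phi$); surjectivity plus equality of dimensions then gives the isomorphism, with no appeal to irreducibility of the source. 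You instead exhibit the single harmonic witness $(x-iy)^{n}$, check $\phi^{*}\bigl((x-iy)^{n}\bigr)=(-v^{2})^{n}\neq 0$, and invoke Schur's lemma for the nonzero equivariant map between the irreducible modules $\Hn{n}(\CC^{3})$ and $\Sn{2n}$. Both arguments are valid. Yours is shorter and purely formal, but it takes as input the irreducibility of $\Hn{n}(\CC^{3})$ as an $\SL(2,\CC)$-module (the paper asserts this just before the theorem, citing the literature, and already uses it in the preceding paragraph to get existence and uniqueness up to scalar, so you are within the paper's own framework). The paper's surjectivity argument buys something extra that your proof does not: an effective procedure for computing the inverse $(\phi^{*})^{-1}$, i.e.\ for passing from a binary form back to a harmonic polynomial, which is what is actually needed in the worked examples following the theorem and in the computations of Section~\ref{sec:explicit-computations}.
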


\begin{proof}
  Since $\Hn{n}(\CC^{3})$ and $\Sn{2n}$ have the same complex dimension $2n+1$, it is sufficient to prove that the linear mapping $\phi^{*}$ is surjective. Let
  \begin{equation*}
    \ff(u,v) := \sum_{k=0}^{2n} \binom{2n}{k}a_{2n-k}u^{2n-k}v^{k}\in \Sn{2n}.
  \end{equation*}
  For each $k$ make the substitution
  \begin{equation*}
    u^{2n-k}v^{k} \rightarrow
    \left\{ \begin{array}{cc}
    z^{k}(x+iy)^{n-k}, & \text{if } 0 \le k \le n\\
    z^{2n-k}(-x+iy)^{k-n}, & \text{if } n \le k \le 2n
    \end{array}
    \right.
  \end{equation*}
  We obtain this way a homogeneous polynomial $\bp$ in three variables and of degree $n$ such that $\phi^{*}(\bp) = \ff$. Now, let $\bh_{0}$ be the harmonic component of $\bp$ in the harmonic decomposition:
  \begin{equation*}
    \bp = \bh_{0} + \qq\bh_{1} + \dotsb + \qq^{r}\bh_{r},
  \end{equation*}
  as detailed in Appendix~\ref{sec:harmonic-polynomials}, where $\qq := x^{2} + y^{2} + z^{2}$. We get thus
  \begin{equation*}
    \ff(u,v) = \bp \left(\frac{u^{2}-v^{2}}{2},\frac{u^{2}+v^{2}}{2i},uv\right) = \bh_{0} \left(\frac{u^{2}-v^{2}}{2},\frac{u^{2}+v^{2}}{2i},uv\right),
  \end{equation*}
  because $\qq$ vanishes on the isotropic cone
  \begin{equation*}
    C := \set{(x,y,z) \in \CC^{3};\; x^{2} + y^{2} + z^{2} = 0}.
  \end{equation*}
  The $\SL(2,\CC)$-equivariance is a direct consequence of Lemma~\ref{lem:Cartan-equiv}, which achieves the proof.
\end{proof}

\begin{ex}
  A binary form of degree 4
  \begin{equation*}
    \ff(u,v)=a_{0}u^{4}+4a_{1}u^{3}v+6a_{2}u^{2}v^{2}+4a_{3}uv^{3}+a_{4}v^{4}\in \Sn{4}
  \end{equation*}
  corresponds to the harmonic polynomial
  \begin{align*}
    \bh(x,y,z) & = \left(a_{0} + a_{4} - 2a_{2}\right) x^{2} - \left(a_{0}+a_{4} + 2a_{2}\right)y^{2} + 4a_{2}z^{2}
    \\
               & + 2i(a_{4} - a_{0})xy + 4(a_{3}-a_{1})xz + 4i(a_{1} + a_{3})yz.
  \end{align*}
\end{ex}

\begin{ex}
  A binary form of degree 8
  \begin{equation*}
    \bg(u,v) = \sum_{k=0}^8 \binom{8}{k} b_{k}u^{8-k}v^{k}\in \Sn{8}
  \end{equation*}
  corresponds to the harmonic polynomial
  \begin{align*}
    \bh(x,y,z) & = \left( 6\,b_{{4}}-4\,b_{{2}}-4\,b_{{6}}+b_{{0}}+b_{{8}} \right) {x}^{4}
    +\left( b_{{0}}+6\,b_{{4}}+4\,b_{{2}}+4\,b_{{6}}+b_{{8}} \right) {y}^{4}
    +16\,b_{{4}}{z}^{4}															\\
               & + 4\left( -2\,ib_{{2}}+\,ib_{{0}}+2\,ib_{{6}}-\,ib_{{8}} \right) {x}^{3}y
    + 8\left( 3\,b_{{5}}+\,b_{{1}}-\,b_{{7}}-3\,b_{{3}} \right) {x}^{3}z		\\
               & - 8\left( \,ib_{{7}}+3\,ib_{{3}}+\,ib_{{1}}+3\,ib_{{5}} \right) {y}^{3}z
    + 4\left( 2\,ib_{{6}}-\,ib_{{0}}+\,ib_{{8}}-2\,ib_{{2}} \right) x{y}^{3}	\\
               & + 32\left( \,b_{{3}}-\,b_{{5}} \right) x{z}^{3}
    + 32i\left( \,b_{{3}}+\,b_{{5}} \right) y{z}^{3} 							\\
               & + 6\left( -\,b_{{0}}-\,b_{{8}}+2\,b_{{4}} \right) {x}^{2}{y}^{2}
    + 24\left( \,b_{{2}}-2\,b_{{4}}+\,b_{{6}} \right) {x}^{2}{z}^{2} 		\\
               & - 24\left( \,b_{{2}}+\,b_{{6}}+2\,b_{{4}} \right) {y}^{2}{z}^{2}
    + 48i\left(-\,b_{{6}}+\,b_{{2}} \right) xy{z}^{2}							\\
               & + 24\left( -\,b_{{1}}+\,b_{{7}}-\,b_{{3}}+\,b_{{5}} \right) x{y}^{2}z
    + 24i\left( \,b_{{7}}-\,b_{{3}}-\,b_{{5}}+\,b_{{1}} \right) {x}^{2}yz.
  \end{align*}
\end{ex}

\begin{rem}\label{rem:Real_Binary_Forms}
  Binary forms
  \begin{equation*}
    \ff(u,v) := \sum_{k=0}^{n} \binom{2n}{k}a_{2n-k}u^{2n-k}v^{k}
  \end{equation*}
  in $\Sn{2n}$ which are images by $\phi^{*}$ of \emph{real} harmonic polynomials in $\Hn{n}(\RR^{3})$ are defined by the following linear equations:
  \begin{equation}\label{eq:real-binary-forms}
    a_{2n-k} = (-1)^{n-k} \overline{a_{k}}, \qquad 0 \le k \le n.
  \end{equation}
  They can also be characterized by the following equivalent condition
  \begin{equation}\label{eq:real-binary-forms-functional}
    \bar{\ff}(-v,u) = (-1)^{n} \ff(u,v).
  \end{equation}
  These binary forms generate a \emph{real} vectorial subspace of $\Sn{2n}$, invariant by $\SU(2)$.
\end{rem}

% ----------------------------------------------------------------
\section{Invariants and covariants of binary forms}
\label{sec:covariants}
% ----------------------------------------------------------------

The method that have been used to compute the invariants of the elasticity tensor is known as \emph{Gordan's algorithm}. A detailed description of it can be found in~\cite{Oli2016}. This algorithm is based on an extension of the notion of invariants called \emph{covariants}, which is the subject of this section.

% ----------------------------------------------------------------
\subsection{Covariants of a binary form}
% ----------------------------------------------------------------

\begin{defn}
  Let $\ff \in \Sn{n}$ be a binary form. A \emph{covariant} of the binary form $\ff$ is a polynomial
  \begin{equation*}
    \hh(\ff,\bxi) = \sum_{i,j} h_{ij}(\ff)u^{i}v^{j},
  \end{equation*}
  where each $h_{ij}(\ff)$ are polynomials in the coefficients $\ff=(a_k)$ and such that
  \begin{equation}\label{eq:def-covariant}
    \hh(\gamma \cdot \ff, \bxi) = \hh(\ff,\gamma^{-1} \cdot \bxi).
  \end{equation}
  The set of covariants of a binary form $\ff$ is a \emph{sub-algebra} of $\CC[a_{1}, \dotsc , a_{n},u,v]$, called the \emph{covariant algebra} of $\Sn{n}$ and noted $\cov{\Sn{n}}$.
\end{defn}

\begin{rem}
  Note that equation~\eqref{eq:def-covariant} can be recast as
  \begin{equation*}
    \hh(\gamma \cdot \ff, \gamma\cdot\bxi) = \hh(\ff,\bxi),
  \end{equation*}
  and a covariant can also be thought as a polynomial invariant of $\Sn{n}\oplus \CC^{2}$. We have therefore
  \begin{equation*}
    \cov{\Sn{n}} = \CC[\Sn{n}\oplus \CC^{2}]^{\SL(2,\CC)}.
  \end{equation*}
\end{rem}

\begin{rem}
  Given a covariant $\hh(\ff,\bxi)$, the total degree in the variables $a_{k}$ is called the \emph{degree} of $\hh$ whereas the total degree in the variables $u,v$ is called the \emph{order} of $\hh$. The sub-algebra of covariants of order $0$ in $\cov{\Sn{n}}$ is the invariant algebra of $\Sn{n}$, noted also $\inv{\Sn{n}}$.
\end{rem}

\begin{ex}
  Let
  \begin{equation*}
    \ff(\bxi) := a_{0}u^{3} + 3a_{1}u^{2}v + 3a_{2}uv^{2} + a_{3}v^{3},
  \end{equation*}
  be a binary form of degree 3. Its \emph{Hessian}
  \begin{align*}
    \hh(\ff,\bxi) & :=	\frac{\partial^{2} \ff}{\partial u^{2}}\frac{\partial^{2} \ff}{\partial v^{2}}-\left( \frac{\partial^{2} \ff}{\partial u\partial v}\right)^{2} \\
                  & = 36(a_{0}a_{2} - a_{1}^{2})u^{2} + 36(a_{0}a_{3} - a_{1}a_{2})uv + 36(a_{1}a_{3} - a_{2}^{2})v^{2},
  \end{align*}	
  is a covariant of $\ff$ of order $2$ and degree $2$.
\end{ex}

\begin{rem}
  The notion of covariant can of course be extended to several binary forms $\ff_{1}, \dotsc , \ff_{p}$, in which case the coefficients $h_{ij}$ of the covariant are polynomials in all the coefficients of the $\ff_{i}$'s.
\end{rem}

A way to generate covariants is to use an $\SL(2,\CC)$-equivariant bi-differential operator, called the \emph{Cayley operator} and defined by
\begin{equation*}
  \Omega_{\alpha\beta} := \frac{\partial^{2}}{\partial u_{\alpha}\partial v_{\beta}} - \frac{\partial^{2}}{\partial v_{\alpha}\partial u_{\beta}}.
\end{equation*}

\begin{defn}
  The \emph{transvectant} of index $r$ of two binary forms $\ff\in \Sn{n}$ and $\bg\in \Sn{p}$, noted $\trans{\ff}{\bg}{r}$, is defined as the following binary form
  \begin{equation*}
    \trans{\ff}{\bg}{r}(\bxi) := \left\{\Omega^{r}_{\alpha\beta}(\ff(\bxi_{\alpha}) \bg(\bxi_{\beta}))\right\}_{\bxi_{\alpha} = \bxi_{\beta} = \bxi},
  \end{equation*}
  which is of order $n+p-2r$ (for $r\leq \min(n,p)$, it is zero otherwise), where $\Omega^{r}_{\alpha\beta}$ is the $r$-th iterate of the operator $\Omega_{\alpha\beta}$. It is also given by the explicit formula:
  \begin{equation}\label{eq:Trans_Explicit}
    \trans{\ff}{\bg}{r} = \sum_{i=0}^{r}(-1)^{i} \binom{r}{i} \frac{\partial^{r} \mathbf{f}}{\partial^{r-i}u \partial^{i} v}
    \frac{\partial^{r} \mathbf{g}}{\partial^{i}u \partial^{r-i}v}.
  \end{equation}
\end{defn}

\begin{rem}
  Transvectants are connected with the famous \emph{Clebsch--Gordan formula}:
  \begin{equation*}
    \Sn{n} \otimes \Sn{p} \simeq \bigoplus_{r=0}^{\min(n,p)} \Sn{n+p-2r},
  \end{equation*}
  which describes how the tensor product of two $\SL(2,\CC)$-irreducible representations splits into irreducible factors (see for instance~\cite{Ste1994}). The transvectant $\trans{\ff}{\bg}{r}$ corresponds to an explicit projection
  \begin{equation*}
    \Sn{n} \otimes \Sn{p} \to \Sn{n+p-2r},
  \end{equation*}
  which is, up to a scaling factor, unique by Schur's lemma.
\end{rem}

The key point is that by iterating the process of taking transvectants
\begin{equation*}
  \ff_{1}, \dotsc ,\ff_{p} \quad \trans{\ff_{i}}{\ff_{j}}{r}, \quad \trans{\ff_{i}}{\trans{\ff_{j}}{\ff_{k}}{r}}{s}, \quad \dotsc
\end{equation*}
that we shall call \emph{iterated transvectants}, one generates the full algebra $\cov{V}$, where $V = \Sn{n_{1}}\oplus \dotsc \oplus \Sn{n_{p}}$. Restricting to covariants of order $0$, the invariant algebra $\inv{V}$ is also generated. This important fact is summarized in the following theorem (see~\cite{GY2010,Olv1999} for details).

\begin{thm}
  Let $V = \Sn{n_{1}}\oplus \dotsc \oplus \Sn{n_{p}}$. Then, the algebras $\cov{V}$ and $\inv{V}$ are generated by iterated transvectants.
\end{thm}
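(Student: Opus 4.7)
The plan is to exploit Weyl's complete reducibility theorem for finite-dimensional $\SL(2,\CC)$-representations together with the Clebsch--Gordan formula, and to show that the iterated transvectants realize all of the associated Clebsch--Gordan projections.

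First, I would use the identification $\cov{V} = \CC[V \oplus \CC^{2}]^{\SL(2,\CC)}$ from the remark in this section, so that $\inv{V}$ is the subalgebra of covariants of order $0$; it therefore suffices to treat $\cov{V}$ and then restrict to order $0$ at the end. The polynomial algebra $\CC[V \oplus \CC^{2}]$ is bigraded by the degree $d$ in $V$ and the order $k$ in $\CC^{2}$. Using the $\SL(2,\CC)$-invariant symplectic form on $\CC^{2}$ to identify $\Sn{n}^{*} \simeq \Sn{n}$, the bihomogeneous piece of bidegree $(d,k)$ is $\mathrm{Sym}^{d}(V) \otimes \Sn{k}$, and by complete reducibility it splits as a direct sum of irreducibles $\Sn{m}$. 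By Schur's lemma, a covariant of bidegree $(d,k)$ corresponds exactly to a copy of $\Sn{k}$ appearing in $\mathrm{Sym}^{d}(V)$; hence the theorem reduces to showing that every irreducible summand $\Sn{m}$ of every $\mathrm{Sym}^{d}(V)$ is generated, as a covariant, by an iterated transvectant of the generic binary forms $\ff_{i} \in \Sn{n_{i}}$.

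I would establish this last assertion by induction on $d$. For $d = 1$, $\mathrm{Sym}^{1}(V) = \bigoplus_{i} \Sn{n_{i}}$ and each summand is generated by the corresponding $\ff_{i}$, which counts as an iterated transvectant of length one. For the inductive step, the $\SL(2,\CC)$-equivariant multiplication map $V \otimes \mathrm{Sym}^{d-1}(V) \twoheadrightarrow \mathrm{Sym}^{d}(V)$ is surjective; by complete reducibility, every irreducible summand $\Sn{m} \subset \mathrm{Sym}^{d}(V)$ is the image of some summand $\Sn{m} \subset \Sn{n_{i}} \otimes W$, where $W \simeq \Sn{\ell}$ is an irreducible summand of $\mathrm{Sym}^{d-1}(V)$ and $m = n_{i} + \ell - 2r$ for some admissible $r$. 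By Schur's lemma together with the remark of this section, the projection $\Sn{n_{i}} \otimes \Sn{\ell} \to \Sn{m}$ is, up to a non-zero scalar, the transvectant $\trans{\ff_{i}}{\cdot}{r}$. By the induction hypothesis, $W$ is generated by an iterated transvectant $T$, so the targeted summand is generated by $\trans{\ff_{i}}{T}{r}$, again an iterated transvectant.

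The main obstacle I expect is the multiplicity issue: $\Sn{m}$ may appear in $\mathrm{Sym}^{d}(V)$ with multiplicity greater than one, and one must verify that iterated transvectants span the \emph{entire} isotypic component, not just one copy. This is handled because the inductive construction produces, for each choice of index $i$, each irreducible summand $W$ of $\mathrm{Sym}^{d-1}(V)$, and each admissible $r$, an iterated transvectant landing in a copy of $\Sn{m}$; summing over all these choices exhausts the full image of $V \otimes \mathrm{Sym}^{d-1}(V) \twoheadrightarrow \mathrm{Sym}^{d}(V)$, and hence the full isotypic component. Finally, since the order grading on $\cov{V}$ takes non-negative values, any order-$0$ polynomial in iterated transvectants is automatically a polynomial in the order-$0$ iterated transvectants; restricting the generating set to order $0$ therefore yields the analogous statement for $\inv{V}$, completing the proof.
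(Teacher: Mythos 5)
The paper offers no proof of this theorem: it is stated as a classical fact with a pointer to Grace--Young and Olver, where it is traditionally derived via the symbolic method and Gordan's series. Your argument is therefore not paralleling anything in the text, but it is a correct and rather clean representation-theoretic reconstruction: identifying degree-$d$, order-$m$ covariants with $\mathrm{Hom}_{\SL(2,\CC)}\bigl(\mathrm{Sym}^{d}(V),\Sn{m}\bigr)$ via self-duality, pulling back along the surjection $V\otimes\mathrm{Sym}^{d-1}(V)\twoheadrightarrow\mathrm{Sym}^{d}(V)$ (precomposition with a surjection is injective on Hom-spaces, which is exactly what disposes of the multiplicity issue you raise), and using the multiplicity-freeness of Clebsch--Gordan to recognize each resulting piece as $\trans{\ff_{i}}{\cdot}{r}$ applied to a degree-$(d-1)$ covariant. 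What this route buys over the classical one is that you never need to expand transvectants of products via Gordan's series.

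One point should be tightened, because as written the induction hypothesis oscillates between two formulations, one too strong and one too weak. ``$W$ is generated by an iterated transvectant $T$'' is too strong: when the isotypic multiplicity of $\Sn{\ell}$ in $\mathrm{Sym}^{d-1}(V)$ exceeds one, a fixed summand $W$ of a chosen decomposition need not be cut out by a \emph{single} iterated transvectant. On the other hand, the bare algebra-generation statement is too weak: if $T$ were merely a polynomial in iterated transvectants, then $\trans{\ff_{i}}{T}{r}$ would be a transvectant of a product, which is not an iterated transvectant unless one invokes Gordan's series after all. The statement your argument actually establishes, and the one you should induct on, is the linear one: the space of covariants of degree $d-1$ and order $\ell$ is \emph{linearly spanned} by iterated transvectants of that degree. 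Then each equivariant projection $p_{\alpha}$ onto a summand $W\simeq\Sn{\ell}$ is a linear combination $\sum_{k}c_{k}T_{k}$ of iterated transvectants, and bilinearity of the transvectant gives $\trans{\ff_{i}}{p_{\alpha}}{r}=\sum_{k}c_{k}\trans{\ff_{i}}{T_{k}}{r}$, which closes the induction. With that adjustment the proof is complete, and your final reduction from $\cov{V}$ to $\inv{V}$ by restricting to order $0$ is correct since the order is a non-negative grading.
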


Iterated transvectants are thus an infinite system of generators for the invariant and the covariant algebras. The main goal of nineteenth century's invariant theory~\cite{Gor1868,GY2010,Olv1999} was to prove moreover that $\cov{V}$ and $\inv{V}$ were \emph{finitely generated} and to \emph{compute explicitly} minimal integrity bases for these algebras. This goal was achieved first by Gordan~\cite{Gor1868} in 1868 and then by Hilbert~\cite{Hil1993} in 1890 (in a more general setting). The remarkable achievement of Gordan was that his proof was constructive (and extremely efficient). It is now known as the \emph{Gordan algorithm} (see~\cite{Oli2016}) and will be shortly reviewed in the next section.

% ----------------------------------------------------------------
\subsection{Gordan's algorithm}
\label{subsec:Gordan-algorithm}
% ----------------------------------------------------------------

There are two versions of \emph{Gordan's algorithm}. One of them is devoted to the calculation of an integrity basis for the covariant algebra of a \emph{single binary form}. It produces a basis for $\cov{\Sn{n}}$, already knowing bases for $\cov{\Sn{k}}$, for each $k < n$. The second version is devoted to the calculation of an integrity basis for the covariant algebra of several binary forms. It \emph{produces a basis} for $\cov{V_{1}\oplus V_{2}}$, already knowing bases for $\cov{V_{1}}$ and $\cov{V_{2}}$, where $V_{1},V_{2}$ are direct sums of some $\Sn{k}$. Both of them rely on the resolution of a Diophantine equation such as \eqref{eq:diophantine-equation}. It is the second version that has been used to produce the tables of section~\ref{sec:explicit-computations} and that we shortly outline next (a more detailed treatment of these algorithms is provided in~\cite{Oli2016}).

Let $\ff_{1},\dotsb,\ff_{p}$ (resp. $\bg_{1},\dotsb,\bg_{q}$) be a finite generating set for $\cov{V_{1}}$ (resp. $\cov{V_{2}}$). The first observation which proof can be found in~\cite{GY2010,Oli2016} is the following result.

\begin{thm}
  The covariant algebra $\cov{V_{1}\oplus V_{2}}$ is generated by transvectants
  \begin{equation}\label{eq:Trans_Joints}
    \trans{\mathbf{f}_{1}^{\alpha_{1}}\cdots\mathbf{f}_{p}^{\alpha_{p}}}{\mathbf{g}_{1}^{\beta_{1}}\cdots\mathbf{g}_{q}^{\beta_{q}}}{r},
  \end{equation}
  where $\alpha_{i},\beta_{i}\in \NN$.
\end{thm}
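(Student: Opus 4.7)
The strategy is to reduce an arbitrary covariant of $V_{1}\oplus V_{2}$ to a single bihomogeneous component, convert it via polarization into an $\SL(2,\CC)$-equivariant multi-linear map, split that map using the Clebsch--Gordan decomposition, identify each resulting piece as a transvectant by Schur's lemma, and finally absorb everything into the given generating sets of $\cov{V_{1}}$ and $\cov{V_{2}}$. For the first step, observe that the action of $\SL(2,\CC)$ preserves the coefficients of $V_{1}$ and $V_{2}$ separately, so $\cov{V_{1}\oplus V_{2}}$ is bigraded by the bidegree $(d_{1},d_{2})$ and any covariant splits into a finite sum of bihomogeneous pieces, each of which remains a covariant. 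Hence it suffices to treat a single bihomogeneous covariant $\hh$ of bidegree $(d_{1},d_{2})$ and order $m$ in $\bxi$.

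\textbf{Polarization, Clebsch--Gordan and Schur.} Over characteristic zero, polarization/restitution identifies such an $\hh$ with an $\SL(2,\CC)$-equivariant map
\begin{equation*}
\Phi : \mathrm{Sym}^{d_{1}} V_{1} \otimes \mathrm{Sym}^{d_{2}} V_{2} \to \Sn{m},
\end{equation*}
and restitution (evaluation on the diagonal $\ff\otimes\cdots\otimes\ff\otimes\bg\otimes\cdots\otimes\bg$) recovers $\hh$ from $\Phi$. Iterating the Clebsch--Gordan formula gives decompositions $\mathrm{Sym}^{d_{1}} V_{1}\simeq \bigoplus_{\alpha} \Sn{a_{\alpha}}$ and $\mathrm{Sym}^{d_{2}} V_{2}\simeq \bigoplus_{\beta} \Sn{b_{\beta}}$, so that Schur's lemma reduces $\Phi$ to a linear combination of equivariant maps $\Sn{a_{\alpha}}\otimes \Sn{b_{\beta}}\to \Sn{m}$. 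Each such map is, up to a scalar, the transvectant $\trans{\cdot}{\cdot}{r}$ with $r=(a_{\alpha}+b_{\beta}-m)/2$ when that index is admissible, and vanishes otherwise. Restituting, $\hh$ is therefore a linear combination of transvectants $\trans{P_{\alpha}}{Q_{\beta}}{r}$, where $P_{\alpha}\in \cov{V_{1}}$ is bihomogeneous of bidegree $(d_{1},a_{\alpha})$ and $Q_{\beta}\in \cov{V_{2}}$ is bihomogeneous of bidegree $(d_{2},b_{\beta})$.

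\textbf{Bringing in the generators, and the main obstacle.} Since $\ff_{1},\dotsc,\ff_{p}$ generate $\cov{V_{1}}$, each $P_{\alpha}$ writes as a polynomial $\sum_{k} c_{k}\,\ff_{1}^{\alpha_{1}^{(k)}}\cdots\ff_{p}^{\alpha_{p}^{(k)}}$, and likewise $Q_{\beta}=\sum_{\ell} c'_{\ell}\,\bg_{1}^{\beta_{1}^{(\ell)}}\cdots\bg_{q}^{\beta_{q}^{(\ell)}}$. Bilinearity of the transvectant in its two arguments then expands every $\trans{P_{\alpha}}{Q_{\beta}}{r}$ into a sum of terms of the desired shape \eqref{eq:Trans_Joints}, concluding the argument. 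The step I expect to be most delicate is the polarization/restitution passage used to convert a bihomogeneous covariant into the multi-linear map $\Phi$: one must carefully check that the construction respects both the $\SL(2,\CC)$-action and the order grading on the target $\Sn{m}$, and that restitution brings each Schur-projection back to an actual transvectant of covariants rather than merely an abstract equivariant map. Once this bridge is secured, the Clebsch--Gordan decomposition and the bilinearity of transvection render the remainder of the proof essentially automatic.
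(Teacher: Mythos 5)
Your argument is correct, but note that the paper itself does not prove this statement: it defers to Grace--Young \cite{GY2010} and to \cite{Oli2016}, where the proof is the classical one. There, one starts from the fact that $\cov{V_{1}\oplus V_{2}}$ is generated by \emph{iterated} transvectants and then shows, by induction on the transvectant structure using Gordan's series (the identity expressing a transvectant minus one of its ``terms'' as a combination of products of transvectants of lower index), that every iterated transvectant reduces to a linear combination of transvectants of the shape \eqref{eq:Trans_Joints}. Your route --- bigrading, polarization to an equivariant linear map on $\mathrm{Sym}^{d_{1}}V_{1}\otimes\mathrm{Sym}^{d_{2}}V_{2}$, complete reducibility plus Clebsch--Gordan, Schur's lemma identifying every equivariant map $\Sn{a}\otimes\Sn{b}\to\Sn{m}$ with a multiple of a transvectant, then restitution and bilinear expansion in the generators --- is genuinely different and works. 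The two points you flag as delicate do go through: restituting the projection $\mathrm{Sym}^{d_{1}}V_{1}\to\Sn{a_{\alpha}}$ along the diagonal is by definition a covariant of $V_{1}$ of degree $d_{1}$ and order $a_{\alpha}$, and the Clebsch--Gordan projection onto $\Sn{m}$ is realized by the transvectant itself, so the Schur components restitute to honest transvectants of covariants; moreover, since the covariant algebra is graded by order, every monomial in the generators occurring in $P_{\alpha}$ has order $a_{\alpha}$, so the index $r$ is uniform over the bilinear expansion (and any term whose orders are too small for $r$ simply vanishes, so no admissibility issue arises). The trade-off between the two proofs is effectivity: your representation-theoretic argument is a shorter and cleaner existence proof of generation, whereas the classical inductive reduction is constructive and is what actually powers the subsequent step --- controlling which transvectants survive and feeding the Diophantine system $(S)$ whose irreducible solutions give Theorem~\ref{thm:Gord_Alg}. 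As a standalone proof of the generation statement, your argument is complete up to the routine bookkeeping of multidegrees when $V_{1}$ or $V_{2}$ is itself a direct sum of several $\Sn{n_{i}}$ (replace $\mathrm{Sym}^{d_{1}}V_{1}$ by the corresponding product of symmetric powers; nothing changes).
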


Now, since $\trans{\ff}{\bg}{r}$ vanishes unless the order of $\ff$ and $\bg$ are $\ge r$, we only have to consider transvectants in \eqref{eq:Trans_Joints} such that:
\begin{equation*}
  \alpha_{1}a_{1}+\cdots+\alpha_{p}a_{p} \ge r, \qquad \beta_{1}b_{1}+\cdots+\beta_{q}b_{q} \ge r,
\end{equation*}
where $a_{i}$ is the order of $\ff_{i}$ and $b_{j}$ is the order of $\bg_{j}$. Thus any non-vanishing transvectant
\begin{equation*}
  \boldsymbol{\tau}=\trans{\mathbf{f}_{1}^{\alpha_{1}}\cdots\mathbf{f}_{p}^{\alpha_{p}}}{\mathbf{g}_{1}^{\beta_{1}}\cdots\mathbf{g}_{q}^{\beta_{q}}}{r}
\end{equation*}
corresponds to a solution
\begin{equation*}
  \boldsymbol{\kappa}=(\alpha_{1},\dotsb,\alpha_{p},\beta_{1},\dotsb,\beta_{q},u,v,r) \in \NN^{p+q+3}
\end{equation*}
of the Diophantine system
\begin{equation*}
  (S)\: : \: \begin{cases}
  \alpha_{1}a_{1}+\cdots+\alpha_{p}a_{p}&=u+r \\
  \beta_{1}b_{1}+\cdots+\beta_{q}b_{q}&=v+r	
  \end{cases}.
\end{equation*}	
But the \emph{linear Diophantine system} $(S)$ possesses only a finite number of irreducible solutions (which can not be written as a sum of non--trivial solutions) and the result below (see~\cite{Oli2016} for a proof) shows that these irreducible solutions generate $\cov{V_{1}\oplus V_{2}}$.

\begin{thm}[Gordan-1868]\label{thm:Gord_Alg}
  Let $\boldsymbol{\kappa}_{1},\cdots,\boldsymbol{\kappa}_{l}$ be the irreducible solutions of the Diophantine system $(S)$ and let $\boldsymbol{\tau}_{1},\cdots,\boldsymbol{\tau}_{l}$ be the associated transvectants. Then $\cov{V_{1}\oplus V_{2}}$ is generated by $\boldsymbol{\tau}_{1},\cdots,\boldsymbol{\tau}_{l}$.
\end{thm}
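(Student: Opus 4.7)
The plan is to use the preceding theorem, which already asserts that $\cov{V_{1}\oplus V_{2}}$ is generated as an algebra by the \emph{infinite} family of transvectants $\boldsymbol{\tau}(\boldsymbol{\kappa})$ indexed by all solutions $\boldsymbol{\kappa} \in \NN^{p+q+3}$ of the Diophantine system $(S)$. The statement to prove is therefore purely a reduction: every $\boldsymbol{\tau}(\boldsymbol{\kappa})$ coming from a reducible $\boldsymbol{\kappa}$ must be expressible as a polynomial in transvectants attached to strictly smaller solutions, so that an inductive descent terminates at the irreducible list $\boldsymbol{\kappa}_{1},\ldots,\boldsymbol{\kappa}_{l}$.

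First I would endow $\NN^{p+q+3}$ with a well-founded monomial ordering compatible with the semigroup structure (for instance a graded lexicographic order weighting first the transvectant index $r$, then the exponents $\alpha_{i},\beta_{j}$). Irreducible solutions, in the sense of the statement, are then the minimal generators of the set of solutions, viewed as a sub-semigroup of $\NN^{p+q+3}$; Dickson's lemma ensures there are only finitely many, so $l < \infty$. For any reducible $\boldsymbol{\kappa} = \boldsymbol{\kappa}' + \boldsymbol{\kappa}''$ with both summands non-trivial, both lie strictly below $\boldsymbol{\kappa}$ in the ordering.

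The core step is a ``transvectant-of-a-product'' identity. Decompose the monomial factor associated with $\boldsymbol{\kappa}$ as $F_{1}F_{2}$ on the left slot and $G_{1}G_{2}$ on the right, where $F_{i},G_{i}$ are the monomial factors of $\boldsymbol{\kappa}'$ and $\boldsymbol{\kappa}''$. Applying the Leibniz rule to $\Omega_{\alpha\beta}^{r}$ acting on $(F_{1}F_{2})(\bxi_{\alpha})(G_{1}G_{2})(\bxi_{\beta})$ and regrouping partial derivatives, one obtains an expansion of the schematic form
\begin{equation*}
  \trans{F_{1}F_{2}}{G_{1}G_{2}}{r} = \sum_{r'+r''+s = r} c_{r',r'',s}\, \trans{\trans{F_{1}}{G_{1}}{r'}}{\trans{F_{2}}{G_{2}}{r''}}{s} + R,
\end{equation*}
with explicit rational $c_{r',r'',s}$, and where the leading term ($s=0$) factors as $\boldsymbol{\tau}(\boldsymbol{\kappa}')\cdot\boldsymbol{\tau}(\boldsymbol{\kappa}'')$ up to a non-zero scalar. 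Each iterated transvectant in $R$ is, classically, reducible to a polynomial in ordinary transvectants whose associated index tuple is strictly smaller than $\boldsymbol{\kappa}$ in the chosen ordering. Noetherian induction along this ordering then closes: every $\boldsymbol{\tau}(\boldsymbol{\kappa})$ with $\boldsymbol{\kappa}$ reducible lies in the subalgebra generated by $\boldsymbol{\tau}_{1},\ldots,\boldsymbol{\tau}_{l}$, yielding the theorem.

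The main obstacle is establishing the product identity above with enough precision to control the remainder $R$. A direct Leibniz expansion of $\Omega_{\alpha\beta}^{r}$ on $(F_{1}F_{2})(G_{1}G_{2})$ produces many terms of comparable apparent size, and they are not manifestly transvectants of the required shape; one has to reorganize them into genuine iterated transvectants and then reduce each iterated transvectant back to the plain generating list, a step that requires a second classical identity (the ``unmixing'' of $\trans{\ff}{\trans{\bg}{\hh}{s}}{t}$ into transvectants of lower index). Getting this bookkeeping right, so that every auxiliary term that appears is genuinely smaller in the well-founded order and the induction actually terminates, is the delicate part of Gordan's original argument and the point at which a fully rigorous proof (in the style of~\cite{Oli2016}) demands careful combinatorics rather than a one-line invocation of Leibniz.
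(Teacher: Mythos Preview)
The paper does not prove this theorem; it is stated with the parenthetical ``(see~\cite{Oli2016} for a proof)'' and no argument is given. There is therefore no in-paper proof to compare your proposal against.

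Your outline is the classical shape of Gordan's reduction and matches what the cited reference develops: use the preceding theorem to reduce to transvectants indexed by solutions of $(S)$, then show by well-founded induction that a transvectant attached to a reducible solution lies in the subalgebra generated by those attached to strictly smaller solutions, via an expansion identity for transvectants of products. You correctly flag the product-expansion identity and the control of the remainder as the crux, and you are candid that the displayed formula is only schematic. As written, though, the proposal is a roadmap rather than a proof. The schematic expansion is not literally what one obtains: a Leibniz expansion of $\Omega^{r}$ on the fourfold product $(F_{1}F_{2})(G_{1}G_{2})$ also produces cross terms pairing $F_{1}$ with $G_{2}$ and $F_{2}$ with $G_{1}$, and the iterated transvectants with $s>0$ in your sum are not of the original shape $\trans{\prod \ff_{i}^{\alpha_{i}}}{\prod \bg_{j}^{\beta_{j}}}{r}$ and must themselves be unwound before the induction applies. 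The assertion that every auxiliary term is strictly smaller in your order is precisely the lemma one has to \emph{prove}, not assume. Filling this in is exactly the content of the Gordan series identities treated in~\cite{Oli2016,GY2010}.
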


\begin{rem}
  Note that the integrity basis $\set{\boldsymbol{\tau}_{1},\cdots,\boldsymbol{\tau}_{l}}$ may not be minimal. Additional reductions on the set $\set{\boldsymbol{\tau}_{1},\cdots,\boldsymbol{\tau}_{l}}$ may be required to produce a minimal basis \cite{Oli2016,Oli2014,OL2015}.
\end{rem}

Since the computation of \emph{joint invariants} requires the knowledge of \emph{simple covariants}, it might be worth to recall what is known about them. Minimal integrity bases for invariant and covariant algebras of $\Sn{2}$, $\Sn{3}$, $\Sn{4}$ were already available since the middle of the nineteenth century~\cite{Cay1861,Gor1868,GY2010}.

\begin{ex}\label{ex:S2}
  The invariant algebra $\inv{\Sn{2}}$ is generated by the discriminant $\triangle = \trans{\ff}{\ff}{2}$. The covariant algebra $\cov{\Sn{2}}$ is generated by $\triangle$ and $\ff$.
\end{ex}

\begin{ex}\label{ex:S4}
  Let $\ff \in \Sn{4}$, and set
  \begin{equation*}
    \mathbf{h} := \trans{\ff}{\ff}{2}, \quad \kk := \trans{\ff}{\mathbf{h}}{1}, \quad \mathbf{i} := \trans{\ff}{\ff}{4}, \quad \mathbf{j} := \trans{\ff}{\mathbf{h}}{4}.
  \end{equation*}
  Then, we have
  \begin{equation*}
    \inv{\Sn{4}}=\CC[\mathbf{i},\mathbf{j}] \quad \text{and} \quad \cov{\Sn{4}} = \CC[\mathbf{i},\mathbf{j},\ff,\mathbf{h},\kk].
  \end{equation*}
\end{ex}

Gordan and his followers~\cite{Gor1868,vGal1874,vGal1880,vGal1888} were able to produce (without the help of a computer) generating sets for invariant/covariant algebras of $\Sn{5}$, $\Sn{6}$, $\Sn{7}$ and $\Sn{8}$. Some of these generating sets were not minimal, and some contained a few errors, but still, this remains a \emph{tour de force}! These results have since been checked and corrected~\cite{Cro2002,Bed2008,Bed2009}. Minimal integrity bases have been computed recently for the invariant algebra of $\Sn{9}$ and $\Sn{10}$ \cite{BP2010,BP2010a} and also for their covariant algebra~\cite{OL2015}. For higher orders, results are conjectural or unknown. An overview of all these results is available in~\cite{Bro2015}.

% ----------------------------------------------------------------
\subsection{Integrity bases for real tensor spaces}
% ----------------------------------------------------------------

Once a minimal integrity basis $\set{\boldsymbol{\tau}_{1},\dotsc , \boldsymbol{\tau}_{N}}$ has been provided for the invariant algebra of a space of even degree binary forms
\begin{equation*}
  V := \Sn{2n_{1}}\oplus \dotsc \oplus \Sn{2n_{p}},
\end{equation*}
the question arises how to deduce a minimal integrity basis for the corresponding \emph{real} $\SO(3,\RR)$-representation
\begin{equation*}
  W := \Hn{n_{1}}(\RR^{3})\oplus \dotsc \oplus \Hn{n_{p}}(\RR^{3}).
\end{equation*}

Recall first that the complex spaces $V$ and
\begin{equation*}
  W^{\CC} = \Hn{n_{1}}(\CC^{3})\oplus \dotsc \oplus \Hn{n_{p}}(\CC^{3})
\end{equation*}
are isomorphic $\SL(2,\CC)$-representations (see section~\ref{sec:harmonic-tensor-versus-binary-forms}). Therefore, if we set $J_{k} := \boldsymbol{\tau}_{k} \circ \phi^{*}$, where $\phi^{*}$ is the linear isomorphism introduced in Theorem~\ref{thm:explicit-isomorphism}, the set $\set{J_{1},\dotsc , J_{N}}$ is a minimal integrity basis for the invariant algebra of $W^{\CC}$ as an $\SL(2,\CC)$-representation, and also as an $\SO(3,\CC)$-representation.

\textit{A priori}, each $J_{k}$ belongs to $\CC[W^{\CC}]$. The fundamental observation, now, is that the space of binary forms which correspond to real harmonic polynomials (see remark \ref{rem:Real_Binary_Forms}) is stable under the \emph{transvectant process}. More precisely, if $\ff\in \Sn{2n}$ and $\bg\in \Sn{2p}$ are such that
\begin{equation*}
  \ff(u,v)=\overline{\ff}(-v,u),\quad \bg(u,v)=\overline{\bg}(-v,u),
\end{equation*}
then, as a direct application of formula~\eqref{eq:Trans_Explicit}, the transvectant $\hh=\trans{\ff}{\bg}{r}$ satisfies
\begin{equation*}
  \hh(u,v)=\overline{\hh}(-v,u).
\end{equation*}
Therefore, the invariants $J_{k}$, produced by the transvectant process, satisfy the following fundamental property
\begin{equation}\label{eq:real-polynomial}
  J(w)\in \RR[W],\quad \text{if} \quad w \in W.
\end{equation}
It remains to show that $\set{J_{1},\dotsc , J_{N}}$ is also a minimal integrity basis for the real $\SO(3,\RR)$-representation $W$, which is the object of the following lemma.

\begin{lem}\label{lem:Real_Integrity_Basis}
  Let $\set{J_{1},\dotsc ,J_{N}}$ be a minimal integrity basis for the complex $\SO(3,\CC)$-representation $W^{\CC}$ such that each polynomial $J_{k} \in \CC[W^{\CC}]$ satisfies~\eqref{eq:real-polynomial}. Then $\set{J_{1},\dotsc ,J_{N}}$ is a minimal integrity basis for the real $\SO(3,\RR)$-representation $W$.
\end{lem}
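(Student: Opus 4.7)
The plan is to pass between the real and complex settings using Zariski density (of $W$ in $W^{\CC}$ and of $\SO(3,\RR)$ in $\SO(3,\CC)$) and then exploit the reality condition~\eqref{eq:real-polynomial} to force the coefficients of polynomial expressions to be real. There are three things to check: (i) each $J_k$, restricted to $W$, is a real-valued $\SO(3,\RR)$-invariant polynomial; (ii) the $J_k|_W$ generate the real invariant algebra $\RR[W]^{\SO(3,\RR)}$; and (iii) the family is still minimal over $\RR$.

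For (i), the $\SO(3,\CC)$-invariance of $J_k$ restricts to $\SO(3,\RR)$-invariance on $W$, and \eqref{eq:real-polynomial} gives real values. For (ii), the idea is the following. Let $P \in \RR[W]^{\SO(3,\RR)}$. Complexify $P$ by taking the unique polynomial $P^{\CC} \in \CC[W^{\CC}]$ with $P^{\CC}|_W = P$. Because $\SO(3,\RR)$ is Zariski dense in $\SO(3,\CC)$ and $W$ is Zariski dense in $W^{\CC}$, the invariance identity $P(g\cdot w) = P(w)$, valid for $g\in\SO(3,\RR)$ and $w\in W$, extends to $P^{\CC}(g\cdot w)=P^{\CC}(w)$ for all $g\in\SO(3,\CC)$ and $w\in W^{\CC}$. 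Hence $P^{\CC}\in \CC[W^{\CC}]^{\SO(3,\CC)}$, and by the hypothesis of the lemma there exists $Q\in \CC[X_1,\dotsc,X_N]$ with $P^{\CC}=Q(J_1,\dotsc,J_N)$. Write $Q=Q_1+iQ_2$ with $Q_1,Q_2\in \RR[X_1,\dotsc,X_N]$. Evaluating at $w\in W$, both $P(w)$ and all $J_k(w)$ are real by (i) and by the assumption on $P$, so $Q_2(J_1(w),\dotsc,J_N(w))=0$ for every $w\in W$. Consequently $P = Q_1(J_1|_W,\dotsc,J_N|_W)$ on $W$, proving that $\{J_1|_W,\dotsc,J_N|_W\}$ generates $\RR[W]^{\SO(3,\RR)}$.

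For (iii), suppose some $J_k|_W$ were a polynomial in the remaining $J_j|_W$ with real coefficients, i.e.\ there exists $R\in\RR[X_1,\dotsc,\widehat{X_k},\dotsc,X_N]$ with $J_k(w) = R(J_1(w),\dotsc,\widehat{J_k(w)},\dotsc,J_N(w))$ for every $w\in W$. The polynomial $J_k - R(J_1,\dotsc,\widehat{J_k},\dotsc,J_N) \in \CC[W^{\CC}]$ then vanishes identically on the Zariski dense real form $W\subset W^{\CC}$, and hence vanishes on all of $W^{\CC}$. This contradicts the minimality of $\{J_1,\dotsc,J_N\}$ as an integrity basis of $\CC[W^{\CC}]^{\SO(3,\CC)}$.

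The main obstacle is step (ii): one must be careful that the \emph{a priori} complex coefficients appearing when $P$ is expressed in terms of the $J_k$ can actually be chosen real, and the reality of $J_k$ on $W$ (which ultimately comes from the stability of the space of real-type binary forms under transvection, as recorded just before the lemma) is exactly what is needed to split $Q$ into real and imaginary parts and discard the imaginary one. The Zariski density arguments used in (ii) and (iii) require no more than the fact that a nonzero polynomial on $\CC^n$ cannot vanish on $\RR^n$, applied in appropriate coordinates on $W^{\CC}$ and on $\SO(3,\CC)$.
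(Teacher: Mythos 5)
Your proof is correct and follows essentially the same route as the paper's: extend a real invariant to an $\SO(3,\CC)$-invariant on $W^{\CC}$, express it as a complex polynomial in the $J_{k}$, use the reality condition~\eqref{eq:real-polynomial} to replace that polynomial by its real part, and prove minimality by extending any hypothetical relation from $W$ to $W^{\CC}$. The only cosmetic difference is that you justify the real-to-complex extension of invariance by Zariski density of $\SO(3,\RR)\times W$ in $\SO(3,\CC)\times W^{\CC}$, whereas the paper argues via connectedness and the induced Lie algebra action; both are standard and correct.
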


\begin{proof}
  We will use the following classical result. Let $J \in \RR[W]$, then $J$ is $\SO(3,\RR)$-invariant if and only if its holomorphic extension to $W^{\CC}$ is $\SO(3,\CC)$-invariant. This can easily been checked using the fact that these groups are connected and thus that the assertion needs only to be verified at the level of the Lie algebras. In other words, we have to check that $dJ.\xi_{W}(w) = 0$ for all $\xi \in \mathfrak{so}(3, \RR)$ and all $w\in W$ if and only if $dJ.\xi_{W^{\CC}}(\widetilde{w}) = 0$ for all $\xi \in \mathfrak{so}(3, \CC)$ and all $\widetilde{w}\in W^{\CC}$. Here $\xi_{W}$ and $\xi_{W^{\CC}}$ denote the induced action on the respective Lie algebras and $dJ$ is the differential of $J$. Therefore, if $J \in \RR[W]^{\SO(3,\RR)}$, we can find a polynomial $P\in \CC[X_{1},\dotsc,X_{N}]$ such that
  \begin{equation*}
    J(\widetilde{w}) = P(J_{1}(\widetilde{w}),\dotsc,J_{N}(\widetilde{w})), \qquad \widetilde{w} \in W^{\CC}.
  \end{equation*}
  But $J$, as well as all the $J_{k}$, satisfy~\eqref{eq:real-polynomial}. Hence
  \begin{equation*}
    \overline{J}(w) = J(w), \qquad \overline{J_{k}}(w) = J_{k}(w), \qquad \forall w \in W, \qquad k = 1, \dotsc ,N,
  \end{equation*}
  where $\overline{P}$ is the polynomial defined by taking all conjugate coefficients of $P$. We get thus
  \begin{equation*}
    \overline{J}(w) = \bar{P}(\overline{J_{1}}(w),\dotsc ,\overline{J_{N}}(w)) =\overline{P}(J_{1}(w), \dotsc ,J_{N}(w)) = J(w),
  \end{equation*}
  for all $w\in W$. Therefore, if we set $R := \frac{1}{2}(P+\overline{P})$, then $R\in \RR[X_{1},\dotsc,X_{N}]$ and
  \begin{equation*}
    J(w) = R(J_{1}(w), \dotsc ,J_{N}(w)),
  \end{equation*}
  which shows that $\set{J_{1},\dotsc ,J_{N}}$ is an integrity basis for $W$. If it was not minimal, we would have for instance
  \begin{equation*}
    J_{1}(w) = Q(J_{2}(w), \dotsc ,J_{N}(w)),\quad \forall w\in W.
  \end{equation*}
  Such an identity would then also hold for all $\widetilde{w}\in W^{\CC}$, which would lead to a contradiction.
\end{proof}

% ----------------------------------------------------------------
\section{Explicit computations}
\label{sec:explicit-computations}
% ----------------------------------------------------------------

A minimal integrity basis for the space of binary forms $\Sn{4} \oplus \Sn{4} \oplus \Sn{8}$ was computed for the first time in~\cite{Oli2014}. We will use these results, together with an explicit harmonic decomposition $\bC=(\lambda,\mu,\ba,\bb,\bD)$, as detailed in section~\ref{sec:covariants}, to produce a minimal integrity basis for the full elasticity tensor $\bC$. Using the explicit isomorphism $\phi^{*}$, defined in Theorem~\ref{thm:explicit-isomorphism}, between $\Hn{n}(\CC^{3})$ and $\Sn{2n}$, we introduce the following binary forms:
\begin{equation*}
  \hh:=\phi^{*}(\ba)\in \Sn{4},\quad \kk:=\phi^{*}(\bb)\in \Sn{4},\quad \ff:=\phi^{*}(\bD)\in \Sn{8}.
\end{equation*}

As described in section~\ref{subsec:Gordan-algorithm}, it is necessary to compute first a generating set for the \emph{covariant algebras} of $\Sn{8}$ and $\Sn{4} \oplus \Sn{4}$ to compute a generating set for the \emph{invariant algebra} of $\Sn{4} \oplus \Sn{4} \oplus \Sn{8}$. A minimal covariant basis for $\cov{\Sn{8}}$ (see~\cite{Cro2002,LR2012}) is provided in Table~\ref{tab:Cov-S8}, and the covariants are denoted by $\ff_{n}$ ($n=1,\dotsc,69$). A minimal covariant basis for $\cov{\Sn{4} \oplus \Sn{4}}$ is provided in Table~\ref{tab:Cov-S4S4}, and the covariants are denoted by $\hh_{n}$ ($n=1,\dotsc,28$).

A minimal integrity basis for $\HT{4}$ is provided by the 9 invariants in Proposition~\ref{prop:inv-H4}. They correspond to the nine covariants of order $0$: $\ff_{2}$, $\ff_{6}$, $\ff_{14}$, $\ff_{24}$, $\ff_{35}$, $\ff_{44}$, $\ff_{52}$, $\ff_{59}$, $\ff_{64}$ for $\Sn{8}$ in Table~\ref{tab:Cov-S8}.

A minimal integrity basis for $\HT{2}\oplus\HT{2}$ is provided by the 8 invariants in Proposition~\ref{prop:inv-H2H2} (among them, the four simple invariants $\tr(\ba^{2})$, $\tr(\ba^{3})$, $\tr(\bb^{2})$, $\tr(\bb^{3})$). These eight invariants correspond to the covariants of order $0$: $\hh_{3}$, $\hh_{4}$, $\hh_{5}$, $\hh_{11}$, $\hh_{12}$, $\hh_{13}$, $\hh_{14}$, $\hh_{23}$ for $\Sn{4} \oplus \Sn{4}$ in Table~\ref{tab:Cov-S4S4}.

To complete this basis, we have to add twice (for $(\bD,\ba)$ and $(\bD,\bb)$), the 52 joint invariants for $\HT{4} \oplus \HT{2}$ from Table~\ref{tab:S8S4}, where we have introduced the notations:
\begin{equation*}
  \hh_{2,4}:=\trans{\hh}{\hh}{2}\in \Sn{2},\quad \hh_{3,6}:=\trans{\hh}{\hh_{2,4}}{1}\in \Sn{3},
\end{equation*}
and the 174 joint invariants of $\HT{4} \oplus \HT{2} \oplus \HT{2}$ from Table~\ref{tab:S8S4S4}. Note that, in these tables, appear only invariants depending \emph{really} on $(\ff, \hh)$ in the first case, and $(\ff, \hh, \kk)$ in the second case. Thus simple invariants and invariants depending only on $(\ff, \hh)$, $(\ff, \kk)$ or $(\hh, \kk)$ (in the second case) are omitted. We obtain this way $9 + 8 + 2 \times 52 + 174 = 295$ invariants, to which we must add the fundamental invariants $(\lambda, \mu)$ for $\HT{0} \oplus \HT{0}$ to get the 297 invariants of Theorem~\ref{thm:main-result}.

\begin{rem}
  An integrity basis of $299$ invariants was produced in~\cite{Oli2014}. As noticed by Reynald Lercier, this basis was not minimal. Indeed, a degree 11 joint invariant in $\inv{\Sn{8}\oplus\Sn{4}}$ (which needs to be counted twice for our purpose) was superfluous. This mistake has been corrected in~\cite{Oli2016}.
\end{rem}

\begin{table}[p]
  \centering
  \begin{tabular}{|l||l|l||l||l|l|}
    \hline
    Cov.       & Transvectant                     & (d,o)    & Cov.       & Transvectant                       & (d,o)    \\
    \hline
    $\ff_1$    & $\ff$                            & $(1,8)$  & $\ff_{36}$ & $\trans{\ff_{33}}{\ff}{8}$         & $(6,2)$  \\
    $\ff_2$    & $\trans{\ff}{\ff}{8}$            & $(2,0)$  & $\ff_{37}$ & $\trans{\ff_{33}}{\ff}{7}$         & $(6,4)$  \\						
    $\ff_3$    & $\trans{\ff}{\ff}{6}$            & $(2,4)$  & $\ff_{38}$ & $\trans{\ff_{32}}{\ff}{7}$         & $(6,4)$  \\
    $\ff_4$    & $\trans{\ff}{\ff}{4}$            & $(2,8)$  & $\ff_{39}$ & $\trans{\ff_{34}}{\ff}{8}$         & $(6,6)$  \\
    $\ff_5$    & $\trans{\ff}{\ff}{2}$            & $(2,12)$ & $\ff_{40}$ & $\trans{\ff_{33}}{\ff}{6}$         & $(6,6)$  \\
    $\ff_6$    & $\trans{\ff_{4}}{\ff}{8}$        & $(3,0)$  & $\ff_{41}$ & $\trans{\ff_{32}}{\ff}{6}$         & $(6,6)$  \\
    $\ff_7$    & $\trans{\ff_{5}}{\ff}{8}$        & $(3,4)$  & $\ff_{42}$ & $\trans{\ff_{34}}{\ff}{7}$         & $(6,8)$  \\
    $\ff_8$    & $\trans{\ff_{5}}{\ff}{7}$        & $(3,6)$  & $\ff_{43}$ & $\trans{\ff_{34}}{\ff}{6}$         & $(6,10)$ \\
    $\ff_9$    & $\trans{\ff_{5}}{\ff}{6}$        & $(3,8)$  & $\ff_{44}$ & $\trans{\ff_{7}^{2}}{\ff}{8}$      & $(7,0)$  \\
    $\ff_{10}$ & $\trans{\ff_{5}}{\ff}{5}$        & $(3,10)$ & $\ff_{45}$ & $\trans{\ff_{43}}{\ff}{8}$         & $(7,2)$  \\
    $\ff_{11}$ & $\trans{\ff_{5}}{\ff}{4}$        & $(3,12)$ & $\ff_{46}$ & $\trans{\ff_{42}}{\ff}{7}$         & $(7,2)$  \\
    $\ff_{12}$ & $\trans{\ff_{5}}{\ff}{3}$        & $(3,14)$ & $\ff_{47}$ & $\trans{\ff_{43}}{\ff}{7}$         & $(7,4)$  \\
    $\ff_{13}$ & $\trans{\ff_{5}}{\ff}{1}$        & $(3,18)$ & $\ff_{48}$ & $\trans{\ff_{42}}{\ff}{6}$         & $(7,4)$  \\
    $\ff_{14}$ & $\trans{\ff_{9}}{\ff}{8}$        & $(4,0)$  & $\ff_{49}$ & $\trans{\ff_{43}}{\ff}{6}$         & $(7,6)$  \\
    $\ff_{15}$ & $\trans{\ff_{11}}{\ff}{8}$       & $(4,4)$  & $\ff_{50}$ & $\trans{\ff_{42}}{\ff}{5}$         & $(7,6)$  \\
    $\ff_{16}$ & $\trans{\ff_{10}}{\ff}{7}$       & $(4,4)$  & $\ff_{51}$ & $\trans{\ff_{41}}{\ff}{4}$         & $(7,6)$  \\
    $\ff_{17}$ & $\trans{\ff_{12}}{\ff}{8}$       & $(4,6)$  & $\ff_{52}$ & $\trans{\ff_{7}\ff_{16}}{\ff}{8}$  & $(8,0)$  \\
    $\ff_{18}$ & $\trans{\ff_{12}}{\ff}{7}$       & $(4,8)$  & $\ff_{53}$ & $\trans{\ff_{51}}{\ff}{6}$         & $(8,2)$  \\
    $\ff_{19}$ & $\trans{\ff_{13}}{\ff}{8}$       & $(4,10)$ & $\ff_{54}$ & $\trans{\ff_{50}}{\ff}{6}$         & $(8,2)$  \\
    $\ff_{20}$ & $\trans{\ff_{12}}{\ff}{6}$       & $(4,10)$ & $\ff_{55}$ & $\trans{\ff_{51}}{\ff}{5}$         & $(8,4)$  \\
    $\ff_{21}$ & $\trans{\ff_{13}}{\ff}{7}$       & $(4,12)$ & $\ff_{56}$ & $\trans{\ff_{50}}{\ff}{5}$         & $(8,4)$  \\
    $\ff_{22}$ & $\trans{\ff_{13}}{\ff}{6}$       & $(4,14)$ & $\ff_{57}$ & $\trans{\ff_{51}}{\ff}{4}$         & $(8,6)$  \\
    $\ff_{23}$ & $\trans{\ff_{13}}{\ff}{4}$       & $(4,18)$ & $\ff_{58}$ & $\trans{\ff_{50}}{\ff}{4}$         & $(8,6)$  \\
    $\ff_{24}$ & $\trans{\ff_{3}^{2}}{\ff}{8}$    & $(5,0)$  & $\ff_{59}$ & $\trans{\ff_{15}\ff_{16}}{\ff}{8}$ & $(9,0)$  \\
    $\ff_{25}$ & $\trans{\ff_{20}}{\ff}{8}$       & $(5,2)$  & $\ff_{60}$ & $\trans{\ff_{58}}{\ff}{6}$         & $(9,2)$  \\
    $\ff_{26}$ & $\trans{\ff_{21}}{\ff}{8}$       & $(5,4)$  & $\ff_{61}$ & $\trans{\ff_{57}}{\ff}{6}$         & $(9,2)$  \\
    $\ff_{27}$ & $\trans{\ff_{20}}{\ff}{7}$       & $(5,4)$  & $\ff_{62}$ & $\trans{\ff_{16}\ff_{17}}{\ff}{8}$ & $(9,2)$  \\
    $\ff_{28}$ & $\trans{\ff_{22}}{\ff}{8}$       & $(5,6)$  & $\ff_{63}$ & $\trans{\ff_{58}}{\ff}{5}$         & $(9,4)$  \\
    $\ff_{29}$ & $\trans{\ff_{21}}{\ff}{7}$       & $(5,6)$  & $\ff_{64}$ & $\trans{\ff_{17}\ff_{25}}{\ff}{8}$ & $(10,0)$ \\
    $\ff_{30}$ & $\trans{\ff_{22}}{\ff}{7}$       & $(5,8)$  & $\ff_{65}$ & $\trans{\ff_{17}\ff_{27}}{\ff}{8}$ & $(10,2)$ \\
    $\ff_{31}$ & $\trans{\ff_{23}}{\ff}{8}$       & $(5,10)$ & $\ff_{66}$ & $\trans{\ff_{17}\ff_{26}}{\ff}{8}$ & $(10,2)$ \\
    $\ff_{32}$ & $\trans{\ff_{22}}{\ff}{6}$       & $(5,10)$ & $\ff_{67}$ & $\trans{\ff_{27}\ff_{29}}{\ff}{8}$ & $(11,2)$ \\
    $\ff_{33}$ & $\trans{\ff_{21}}{\ff}{5}$       & $(5,10)$ & $\ff_{68}$ & $\trans{\ff_{27}\ff_{28}}{\ff}{8}$ & $(11,2)$ \\
    $\ff_{34}$ & $\trans{\ff_{23}}{\ff}{6}$       & $(5,14)$ & $\ff_{69}$ & $\trans{\ff_{29}\ff_{38}}{\ff}{8}$ & $(12,2)$ \\
    $\ff_{35}$ & $\trans{\ff_{3}\ff_{7}}{\ff}{8}$ & $(6,0)$  &            &                                    &          \\
    \hline
  \end{tabular}
  \caption{A minimal covariant basis for $\Sn{8}$.}
  \label{tab:Cov-S8}
\end{table}

\begin{table}[p]
  \centering
  \begin{tabular}{|l||l|l||l||l|l|}
    \hline
    Cov.       & Transvectant              & $(d_{1},d_{2},o)$ & Cov.       & Transvectant                   & $(d_{1},d_{2},o)$ \\ \hline
    $\hh_{1}$  & $\hh$                     & $(1,0,4)$         & $\hh_{15}$ & $\trans{\hh}{\hh_{8}}{3}$      & $(1,2,2)$         \\
    $\hh_{2}$  & $\kk$                     & $(0,1,4)$         & $\hh_{16}$ & $\trans{\kk}{\hh_{7}}{3}$      & $(2,1,2)$         \\
    $\hh_{3}$  & $\trans{\hh}{\hh}{4}$     & $(2,0,0)$         & $\hh_{17}$ & $\trans{\hh}{\hh_{8}}{2}$      & $(1,2,4)$         \\
    $\hh_{4}$  & $\trans{\kk}{\kk}{4}$     & $(0,2,0)$         & $\hh_{18}$ & $\trans{\kk}{\hh_{7}}{2}$      & $(2,1,4)$         \\
    $\hh_{5}$  & $\trans{\hh}{\kk}{4}$     & $(1,1,0)$         & $\hh_{19}$ & $\trans{\hh}{\hh_{7}}{1}$      & $(3,0,6)$         \\
    $\hh_{6}$  & $\trans{\hh}{\kk}{3}$     & $(1,1,2)$         & $\hh_{20}$ & $\trans{\kk}{\hh_{8}}{1}$      & $(0,3,6)$         \\
    $\hh_{7}$  & $\trans{\hh}{\hh}{2}$     & $(2,0,4)$         & $\hh_{21}$ & $\trans{\hh}{\hh_{8}}{1}$      & $(1,2,6)$         \\
    $\hh_{8}$  & $\trans{\kk}{\kk}{2}$     & $(0,2,4)$         & $\hh_{22}$ & $\trans{\kk}{\hh_{7}}{1}$      & $(2,1,6)$         \\
    $\hh_{9}$  & $\trans{\hh}{\kk}{2}$     & $(1,1,4)$         & $\hh_{23}$ & $\trans{\hh_{7}}{\hh_{8}}{4}$  & $(2,2,0)$         \\
    $\hh_{10}$ & $\trans{\hh}{\kk}{1}$     & $(1,1,6)$         & $\hh_{24}$ & $\trans{\hh_{7}}{\hh_{8}}{3}$  & $(2,2,2)$         \\
    $\hh_{11}$ & $\trans{\hh}{\hh_{7}}{4}$ & $(3,0,0)$         & $\hh_{25}$ & $\trans{\hh_{19}}{\kk}{4}$     & $(3,1,2)$         \\
    $\hh_{12}$ & $\trans{\kk}{\hh_{8}}{4}$ & $(0,3,0)$         & $\hh_{26}$ & $\trans{\hh}{\hh_{20}}{4}$     & $(1,3,2)$         \\
    $\hh_{13}$ & $\trans{\hh}{\hh_{8}}{4}$ & $(1,2,0)$         & $\hh_{27}$ & $\trans{\hh^{2}}{\hh_{20}}{6}$ & $(2,3,2)$         \\
    $\hh_{14}$ & $\trans{\kk}{\hh_{7}}{4}$ & $(2,1,0)$         & $\hh_{28}$ & $\trans{\hh_{19}}{\kk^{2}}{6}$ & $(3,2,2)$         \\ \hline
  \end{tabular}
  \caption{A minimal covariant basis for $\Sn{4}\oplus\Sn{4}$.}
  \label{tab:Cov-S4S4}
\end{table}

\begin{table}[p]
  \centering
  \begin{tabular}{|l|p{11cm}|}
    \hline
    Degree 3  & $\trans{\ff_{3}}{\hh}{4}$,
    $\trans{\ff_{1}}{\hh^{2}}{8}$
    \\ \hline
    Degree 4  & $\trans{\ff_{1}}{\hh\cdot\hh_{2,4}}{8}$,
    $\trans{\ff_4}{\hh^{2}}{8}$,
    $\trans{\ff_{3}}{\hh_{2,4}}{4}$,
    $\trans{\ff_7}{\hh}{4}$
    \\ \hline
    Degree 5  & $\trans{\ff_{1}}{\hh_{2,4}^{2}}{8}$,
    $\trans{\ff_4}{\hh\cdot\hh_{2,4}}{8}$,
    $\trans{\ff_5}{\hh^{3}}{12}$,
    $\trans{\ff_7}{\hh_{2,4}}{4}$,
    $\trans{\ff_9}{\hh^{2}}{8}$,
    $\trans{\ff_{15}}{\hh}{4}$,
    $\trans{\ff_{16}}{\hh}{4}$
    \\ \hline
    Degree 6  & $\trans{\ff_4}{\hh_{2,4}^{2}}{8}$,
    $\trans{\ff_5}{\hh^{2}\cdot\hh_{2,4}}{12}$,
    $\trans{\ff_{11}}{\hh^{3}}{12}$,
    $\trans{\ff_9}{\hh\cdot\hh_{2,4}}{8}$,
    $\trans{\ff_{15}}{\hh_{2,4}}{4}$,
    $\trans{\ff_8}{\hh_{3,6}}{6}$,
    $\trans{\ff_{18}}{\hh^{2}}{8}$,
    $\trans{\ff_{16}}{\hh_{2,4}}{4}$,
    $\trans{\ff_{26}}{\hh}{4}$,
    $\trans{\ff_{27}}{\hh}{4}$
    \\ \hline
    Degree 7  & $\trans{\ff_5}{\hh\cdot\hh_{2,4}^{2}}{12}$,
    $\trans{\ff_{10}}{\hh\cdot\hh_{3,6}}{10}$,
    $\trans{\ff_{11}}{\hh^{2}\cdot\hh_{2,4}}{12}$,
    $\trans{\ff_{18}}{\hh\cdot\hh_{2,4}}{8}$,
    $\trans{\ff_{17}}{\hh_{3,6}}{6}$,
    $\trans{\ff_{21}}{\hh^{3}}{12}$,
    $\trans{\ff_{30}}{\hh^{2}}{8}$,
    $\trans{\ff_{27}}{\hh_{2,4}}{4}$,
    $\trans{\ff_{26}}{\hh_{2,4}}{4}$,
    $\trans{\ff_{37}}{\hh}{4}$, $\trans{\ff_{38}}{\hh}{4}$
    \\ \hline
    Degree 8  & $\trans{\ff_{47}}{\hh}{4}$,
    $\trans{\ff_{48}}{\hh}{4}$,
    $\trans{\ff_{37}}{\hh_{2,4}}{4}$,
    $\trans{\ff_{38}}{\hh_{2,4}}{4}$,
    $\trans{\ff_{42}}{\hh^{2}}{8}$,
    $\trans{\ff_{29}}{\hh_{3,6}}{6}$,
    $\trans{\ff_{30}}{\hh\cdot\hh_{2,4}}{8}$,
    $\trans{\ff_{20}}{\hh\cdot\hh_{3,6}}{10}$,
    $\trans{\ff_{21}}{\hh^{2}\cdot\hh_{2,4}}{12}$,
    $\trans{\ff_{11}}{\hh\cdot\hh_{2,4}^{2}}{12}$
    \\ \hline
    Degree 9  & $\trans{\ff_8^{2}}{\hh^{3}}{12}$,
    $\trans{\ff_{48}}{\hh_{2,4}}{4}$,
    $\trans{\ff_{47}}{\hh_{2,4}}{4}$,
    $\trans{\ff_{55}}{\hh}{4}$,
    $\trans{\ff_{56}}{\hh}{4}$
    \\ \hline
    Degree 10 & $\trans{\ff_{56}}{\hh_{2,4}}{4}$,
    $\trans{\ff_{63}}{\hh}{4}$
    \\ \hline
    Degree 11 & $\trans{\ff_{25}^{2}}{\hh}{4}$
    \\ \hline
  \end{tabular}
  \caption{Joint invariants for $\Sn{8} \oplus \Sn{4}$.}
  \label{tab:S8S4}
\end{table}

\begin{table}[p]
  \centering
  \begin{tabular}{|l|p{11cm}|}
    \hline
    Degree 3  & $\trans{\ff_{1}}{\hh_{1}\cdot\hh_{2}}{8}$
    \\ \hline
    Degree 4  & $\trans{\ff_{1}}{\hh_{1}\cdot\hh_{8}}{8}$,
    $\trans{\ff_{1}}{\hh_{2}\cdot\hh_{9}}{8}$,
    $\trans{\ff_{1}}{\hh_{2}\cdot\hh_{7}}{8}$,
    $\trans{\ff_{1}}{\hh_{1}\cdot\hh_{9}}{8}$,
    $\trans{\ff_{3}}{\hh_{9}}{4}$,
    $\trans{\ff_{4}}{\hh_{1}\cdot\hh_{2}}{8}$
    \\ \hline
    Degree 5  & $\trans{\ff_{1}}{\hh_{8}\cdot\hh_{9}}{8}$,
    $\trans{\ff_{1}}{\hh_{2}\cdot\hh_{17}}{8}$,
    $\trans{\ff_{1}}{\hh_{7}\cdot\hh_{8}}{8}$,
    $\trans{\ff_{1}}{\hh_{2}\cdot\hh_{18}}{8}$,			
    $\trans{\ff_{1}}{\hh_{9}^{2}}{8}$,
    $\trans{\ff_{1}}{\hh_{7}\cdot\hh_{9}}{8}$,
    $\trans{\ff_{1}}{\hh_{1}\cdot\hh_{18}}{8}$,
    $\trans{\ff_{4}}{\hh_{1}\cdot\hh_{8}}{8}$,
    $\trans{\ff_{4}}{\hh_{2}\cdot\hh_{9}}{8}$,			
    $\trans{\ff_{5}}{\hh_{1}\cdot\hh_{2}^{2}}{12}$,
    $\trans{\ff_{3}}{\hh_{17}}{4}$,
    $\trans{\ff_{4}}{\hh_{2}\cdot\hh_{7}}{8}$,
    $\trans{\ff_{3}}{\hh_{18}}{4}$,
    $\trans{\ff_{4}}{\hh_{1}\cdot\hh_{9}}{8}$,
    $\trans{\ff_{5}}{\hh_{1}^{2}\cdot\hh_{2}}{12}$,
    $\trans{\ff_{9}}{\hh_{1}\cdot\hh_{2}}{8}$,
    $\trans{\ff_{7}}{\hh_{9}}{4}$,
    $\trans{\ff_{8}}{\hh_{10}}{6}$
    \\ \hline
    Degree 6  & $\trans{\ff_{1}}{\hh_{8}\cdot\hh_{17}}{8}$,
    $\trans{\ff_{1}}{\hh_{2}\cdot\hh_{6}^{2}}{8}$,
    $\trans{\ff_{1}}{\hh_{9}\cdot\hh_{17}}{8}$,
    $\trans{\ff_{1}}{\hh_{9}\cdot\hh_{18}}{8}$ 			
    $\trans{\ff_{1}}{\hh_{1}\cdot\hh_{6}^{2}}{8}$,
    $\trans{\ff_{1}}{\hh_{7}\cdot\hh_{18}}{8}$,
    $\trans{\ff_{4}}{\hh_{2}\cdot\hh_{17}}{8}$,
    $\trans{\ff_{5}}{\hh_{1}\cdot\hh_{2}\cdot\hh_{8}}{12}$,
    $\trans{\ff_{4}}{\hh_{8}\cdot\hh_{9}}{8}$,
    $\trans{\ff_{5}}{\hh_{2}^{2}\cdot\hh_{9}}{12}$,
    $\trans{\ff_{4}}{\hh_{2}\cdot\hh_{18}}{8}$,
    $\trans{\ff_{5}}{\hh_{1}\cdot\hh_{2}\cdot\hh_{9}}{12}$,
    $\trans{\ff_{5}}{\hh_{1}^{2}\cdot\hh_{8}}{12}$,
    $\trans{\ff_{4}}{\hh_{9}^{2}}{8}$,
    $\trans{\ff_{4}}{\hh_{7}\cdot\hh_{8}}{8}$,
    $\trans{\ff_{5}}{\hh_{2}^{2}\cdot\hh_{7}}{12}$,
    $\trans{\ff_{5}}{\hh_{1}^{2}\cdot\hh_{9}}{12}$,		
    $\trans{\ff_{4}}{\hh_{7}\cdot\hh_{9}}{8}$,
    $\trans{\ff_{4}}{\hh_{1}\cdot\hh_{18}}{8}$,
    $\trans{\ff_{5}}{\hh_{1}\cdot\hh_{2}\cdot\hh_{7}}{12}$,
    $\trans{\ff_{9}}{\hh_{1}\cdot\hh_{8}}{8}$,		
    $\trans{\ff_{8}}{\hh_{21}}{6}$,
    $\trans{\ff_{10}}{\hh_{2}\cdot\hh_{10}}{10}$,
    $\trans{\ff_{8}}{\hh_{2}\cdot\hh_{6}}{6}$,
    $\trans{\ff_{9}}{\hh_{2}\cdot\hh_{9}}{8}$,
    $\trans{\ff_{11}}{\hh_{1}\cdot\hh_{2}^{2}}{12}$,			
    $\trans{\ff_{11}}{\hh_{1}^{2}\cdot\hh_{2}}{12}$,
    $\trans{\ff_{10}}{\hh_{1}\cdot\hh_{10}}{10}$,
    $\trans{\ff_{9}}{\hh_{2}\cdot\hh_{7}}{8}$,
    $\trans{\ff_{9}}{\hh_{1}\cdot\hh_{9}}{8}$,		
    $\trans{\ff_{8}}{\hh_{1}\cdot\hh_{6}}{6}$,
    $\trans{\ff_{8}}{\hh_{22}}{6}$,
    $\trans{\ff_{16}}{\hh_{9}}{4}$,						
    $\trans{\ff_{17}}{\hh_{10}}{6}$,
    $\trans{\ff_{18}}{\hh_{1}\cdot\hh_{2}}{8}$,
    $\trans{\ff_{15}}{\hh_{9}}{4}$
    \\ \hline
    Degree 7  & $\trans{\ff_{5}}{\hh_{2}^{2}\cdot\hh_{17}}{12}$,
    $\trans{\ff_{5}}{\hh_{1}\cdot\hh_{8}^{2}}{12}$,
    $\trans{\ff_{5}}{\hh_{2}\cdot\hh_{8}\cdot\hh_{9}}{12}$,
    $\trans{\ff_{5}}{\hh_{2}^{2}\cdot\hh_{18}}{12}$,
    $\trans{\ff_{5}}{\hh_{1}\cdot\hh_{8}\cdot\hh_{9}}{12}$,
    $\trans{\ff_{5}}{\hh_{2}\cdot\hh_{7}\cdot\hh_{8}}{12}$,
    $\trans{\ff_{5}}{\hh_{2}\cdot\hh_{9}^{2}}{12}$,
    $\trans{\ff_{5}}{\hh_{1}\cdot\hh_{9}^{2}}{12}$,
    $\trans{\ff_{5}}{\hh_{2}\cdot\hh_{7}\cdot\hh_{9}}{12}$,
    $\trans{\ff_{5}}{\hh_{1}\cdot\hh_{2}\cdot\hh_{18}}{12}$,
    $\trans{\ff_{5}}{\hh_{1}\cdot\hh_{7}\cdot\hh_{8}}{12}$,
    $\trans{\ff_{5}}{\hh_{1}\cdot\hh_{7}\cdot\hh_{9}}{12}$,
    $\trans{\ff_{5}}{\hh_{1}^{2}\cdot\hh_{18}}{12}$,
    $\trans{\ff_{5}}{\hh_{2}\cdot\hh_{7}^{2}}{12}$,
    $\trans{\ff_{10}}{\hh_{2}\cdot\hh_{21}}{10}$,
    $\trans{\ff_{10}}{\hh_{1}\cdot\hh_{20}}{10}$,			
    $\trans{\ff_{11}}{\hh_{2}^{2}\cdot\hh_{9}}{12}$,
    $\trans{\ff_{11}}{\hh_{1}\cdot\hh_{2}\cdot\hh_{8}}{12}$,
    $\trans{\ff_{10}}{\hh_{2}^{2}\cdot\hh_{6}}{10}$,			
    $\trans{\ff_{12}}{\hh_{2}^{2}\cdot\hh_{10}}{14}$,
    $\trans{\ff_{10}}{\hh_{1}\cdot\hh_{2}\cdot\hh_{6}}{10}$,
    $\trans{\ff_{11}}{\hh_{1}^{2}\cdot\hh_{8}}{12}$,
    $\trans{\ff_{10}}{\hh_{1}\cdot\hh_{21}}{10}$,			
    $\trans{\ff_{10}}{\hh_{2}\cdot\hh_{22}}{10}$,
    $\trans{\ff_{12}}{\hh_{1}\cdot\hh_{2}\cdot\hh_{10}}{14}$,
    $\trans{\ff_{11}}{\hh_{1}\cdot\hh_{2}\cdot\hh_{9}}{12}$,
    $\trans{\ff_{11}}{\hh_{2}^{2}\cdot\hh_{7}}{12}$,
    $\trans{\ff_{9}}{\hh_{9}^{2}}{8}$, 						
    $\trans{\ff_{10}}{\hh_{1}\cdot\hh_{22}}{10}$,
    $\trans{\ff_{11}}{\hh_{1}\cdot\hh_{2}\cdot\hh_{7}}{12}$,
    $\trans{\ff_{11}}{\hh_{1}^{2}\cdot\hh_{9}}{12}$,
    $\trans{\ff_{10}}{\hh_{1}^{2}\cdot\hh_{6}}{10}$,
    $\trans{\ff_{10}}{\hh_{2}\cdot\hh_{19}}{10}$,
    $\trans{\ff_{12}}{\hh_{1}^{2}\cdot\hh_{10}}{14}$,
    $\trans{\ff_{21}}{\hh_{1}\cdot\hh_{2}^{2}}{12}$,
    $\trans{\ff_{18}}{\hh_{2}\cdot\hh_{9}}{8}$,
    $\trans{\ff_{17}}{\hh_{21}}{6}$,
    $\trans{\ff_{17}}{\hh_{2}\cdot\hh_{6}}{6}$,
    $\trans{\ff_{20}}{\hh_{2}\cdot\hh_{10}}{10}$,
    $\trans{\ff_{19}}{\hh_{2}\cdot\hh_{10}}{10}$,
    $\trans{\ff_{18}}{\hh_{1}\cdot\hh_{8}}{8}$,
    $\trans{\ff_{17}}{\hh_{1}\cdot\hh_{6}}{6}$,
    $\trans{\ff_{18}}{\hh_{1}\cdot\hh_{9}}{8}$,
    $\trans{\ff_{17}}{\hh_{22}}{6}$,
    $\trans{\ff_{20}}{\hh_{1}\cdot\hh_{10}}{10}$,
    $\trans{\ff_{21}}{\hh_{1}^{2}\cdot\hh_{2}}{12}$,
    $\trans{\ff_{19}}{\hh_{1}\cdot\hh_{10}}{10}$,
    $\trans{\ff_{18}}{\hh_{2}\cdot\hh_{7}}{8}$,
    $\trans{\ff_{29}}{\hh_{10}}{6}$,
    $\trans{\ff_{30}}{\hh_{1}\cdot\hh_{2}}{8}$,
    $\trans{\ff_{26}}{\hh_{9}}{4}$,
    $\trans{\ff_{27}}{\hh_{9}}{4}$,
    $\trans{\ff_{28}}{\hh_{10}}{6}$
    \\ \hline
    Degree 8  & $\trans{\ff_{37}}{\hh_{9}}{4}$,
    $\trans{\ff_{38}}{\hh_{9}}{4}$,
    $\trans{\ff_{40}}{\hh_{10}}{6}$,
    $\trans{\ff_{41}}{\hh_{10}}{6}$,
    $\trans{\ff_{42}}{\hh_{1}\cdot\hh_{2}}{8}$,
    $\trans{\ff_{29}}{\hh_{21}}{6}$,
    $\trans{\ff_{30}}{\hh_{1}\cdot\hh_{8}}{8}$,
    $\trans{\ff_{30}}{\hh_{2}\cdot\hh_{9}}{8}$,
    $\trans{\ff_{31}}{\hh_{2}\cdot\hh_{10}}{10}$,
    $\trans{\ff_{32}}{\hh_{2}\cdot\hh_{10}}{10}$,
    $\trans{\ff_{33}}{\hh_{2}\cdot\hh_{10}}{10}$,
    $\trans{\ff_{29}}{\hh_{22}}{6}$,
    $\trans{\ff_{30}}{\hh_{1}\cdot\hh_{9}}{8} $,
    $\trans{\ff_{30}}{\hh_{2}\cdot\hh_{7}}{8}$,
    $\trans{\ff_{31}}{\hh_{1}\cdot\hh_{10}}{10}$,
    $\trans{\ff_{32}}{\hh_{1}\cdot\hh_{10}}{10}$,
    $\trans{\ff_{33}}{\hh_{1}\cdot\hh_{10}}{10}$,
    $\trans{\ff_{20}}{\hh_{2}\cdot\hh_{22}}{10}$,
    $\trans{\ff_{20}}{\hh_{1}\cdot\hh_{2}\cdot\hh_{6}}{10}$,
    $\trans{\ff_{21}}{\hh_{1}^{2}\cdot\hh_{8}}{12}$,
    $\trans{\ff_{21}}{\hh_{1}\cdot\hh_{2}\cdot\hh_{9}}{12}$,
    $\trans{\ff_{21}}{\hh_{2}^{2}\cdot\hh_{7}}{12}$,
    $\trans{\ff_{22}}{\hh_{1}\cdot\hh_{2}\cdot\hh_{10}}{14}$,
    $\trans{\ff_{20}}{\hh_{1}\cdot\hh_{22}}{10}$
    $\trans{\ff_{20}}{\hh_{1}^{2}\cdot\hh_{6}}{10}$,
    $\trans{\ff_{21}}{\hh_{1}^{2}\cdot\hh_{9}}{12}$,
    $\trans{\ff_{21}}{\hh_{1}\cdot\hh_{2}\cdot\hh_{7}}{12}$,
    $\trans{\ff_{22}}{\hh_{1}^{2}\cdot\hh_{10}}{14}$,
    $\trans{\ff_{11}}{\hh_{2}\cdot\hh_{7}\cdot\hh_{9}}{12}$,
    $\trans{\ff_{12}}{\hh_{1}^{2}\cdot\hh_{2}\cdot\hh_{6}}{14}$,
    $\trans{\ff_{13}}{\hh_{1}^{2}\cdot\hh_{2}\cdot\hh_{10}}{18}$,
    $\trans{\ff_{11}}{\hh_{2}\cdot\hh_{7}^{2}}{12}$,
    $\trans{\ff_{12}}{\hh_{1^{3}}\cdot\hh_{6}}{14}$,
    $\trans{\ff_{13}}{\hh_{1^{3}}\cdot\hh_{10}}{18}$,
    $\trans{\ff_{11}}{\hh_{2}\cdot\hh_{9}^{2}}{12}$,
    $\trans{\ff_{12}}{\hh_{1}\cdot\hh_{2}^{2}\cdot\hh_{6}}{14}$,
    $\trans{\ff_{13}}{\hh_{1}\cdot\hh_{2}^{2}\cdot\hh_{10}}{18}$,
    $\trans{\ff_{20}}{\hh_{2}\cdot\hh_{21}}{10}$,
    $\trans{\ff_{20}}{\hh_{2}^{2}\cdot\hh_{6}}{10}$,
    $\trans{\ff_{21}}{\hh_{1}\cdot\hh_{2}\cdot\hh_{8}}{12}$,
    $\trans{\ff_{21}}{\hh_{2}^{2}\cdot\hh_{9}}{12}$,
    $\trans{\ff_{22}}{\hh_{2}^{2}\cdot\hh_{10}}{14}$,
    $\trans{\ff_{11}}{\hh_{2}\cdot\hh_{8}\cdot\hh_{9}}{12}$,
    $\trans{\ff_{12}}{\hh_{2^{3}}\cdot\hh_{6}}{14}$,
    $\trans{\ff_{13}}{\hh_{2^{3}}\cdot\hh_{10}}{18}$
    \\ \hline
    Degree 9  & $\trans{\ff_{1}\cdot\ff_{25}}{\hh_{2}\cdot\hh_{10}}{10}$,
    $\trans{\ff_{43}}{\hh_{2}\cdot\hh_{10}}{10}$,
    $\trans{\ff_{8}^{2}}{\hh_{1}\cdot\hh_{2}^{2}}{12}$,
    $\trans{\ff_{1}\cdot\ff_{25}}{\hh_{1}\cdot\hh_{10}}{10}$,
    $\trans{\ff_{8}^{2}}{\hh_{1}^{2}\cdot\hh_{2}}{12}$,
    $\trans{\ff_{43}}{\hh_{1}\cdot\hh_{10}}{10}$,
    $\trans{\ff_{3}\cdot\ff_{25}}{\hh_{10}}{6}$,
    $\trans{\ff_{51}}{\hh_{10}}{6}$,
    $\trans{\ff_{48}}{\hh_{9}}{4}$,
    $\trans{\ff_{47}}{\hh_{9}}{4}$
    \\ \hline
    Degree 10 & $\trans{\ff_{54}}{\hh_{6}}{2}$,
    $\trans{\ff_{56}}{\hh_{9}}{4}$ 										
    \\ \hline
    Degree 11 & $\trans{\ff_{61}}{\hh_{6}}{2}$,
    $\trans{\ff_{62}}{\hh_{6}}{2}$,
    $\trans{\ff_{63}}{\hh_{9}}{4}$								
    \\ \hline
  \end{tabular}
  \caption{Joint invariants for $\Sn{8} \oplus \Sn{4} \oplus \Sn{4}$.}
  \label{tab:S8S4S4}
\end{table}

% ----------------------------------------------------------------
\clearpage
\appendix
% ----------------------------------------------------------------

% ----------------------------------------------------------------
\section{Harmonic polynomials}
\label{sec:harmonic-polynomials}
% ----------------------------------------------------------------

There is a well-known correspondence between totally symmetric tensors of order $n$ and  homogeneous polynomials of degree $n$ on $\RR^{3}$. Indeed, to each symmetric tensor $\bT \in \mathbb{S}^{n}(\RR^{3})$ corresponds a homogeneous polynomial of degree $n$ given by
\begin{equation*}\label{map:equivariant_Harm}
  \bp(\vv) := \bT(\vv,\dotsc,\vv), \qquad \vv \in \RR^{3}.
\end{equation*}

This correspondence defines a linear isomorphism $\psi$ between the tensor space $\ST{n}(\RR^{3})$ of totally symmetric tensors of order $n$ on $\RR^{3}$ and the polynomial space $\RR_{n}[\RR^{3}]$ of homogeneous polynomials of degree $n$ on $\RR^{3}$. The inverse $\bT = \psi^{-1}(\bp)$ can be recovered by \emph{polarization}.
More precisely, the expression
\begin{equation*}
  \bp(t_{1}\vv_{1} + \dotsc + t_{n}\vv_{n})
\end{equation*}
is a homogeneous polynomial in the variables $t_{1},\dotsc,t_n$ and we get
\begin{equation*}
  \bT(\vv_{1},\dotsc,\vv_{n}) = \frac{1}{n!} \left.\frac{\partial^{n}}{\partial t_{1} \dotsb \partial t_{n}}\right|_{t_{1} = \dotsb = t_{n} = 0}\bp(t_{1}\vv_{1} + \dotsc + t_{n}\vv_{n}).
\end{equation*}

The rotation group $\SO(3,\RR)$ acts on the polynomial space $\RR_{n}[\RR^{3}]$, by the rule
\begin{equation*}
  (g\cdot \bp)(\vv) := \bp(g^{-1}\cdot \vv), \qquad g \in \SO(3,\RR),
\end{equation*}
and the linear isomorphism $\psi$ is moreover \emph{equivariant}, meaning that
\begin{equation*}
  \psi(g \cdot \bT) = g \cdot \psi(\bT).
\end{equation*}
In other words, the following diagram commutes for $g \in \SO(3,\RR)$:
\begin{equation*}
  \xymatrix{
    \ST{n}(\RR^{3}) \ar[d]_{g} \ar[rr]^{\psi} & & \RR_{n}[\RR^{3}] \ar[d]^{g} \\
    \ST{n}(\RR^{3}) \ar[rr]_{\psi} & &\RR_{n}[\RR^{3}]
  }
\end{equation*}

The sub-space $\HT{n}(\RR^{3})$ of harmonic tensors corresponds under the isomorphism $\psi$ to the sub-space $\Hn{n}(\RR^{3})$ of harmonic polynomials (polynomials with vanishing Laplacian) in $\RR_{n}[\RR^{3}]$ (an other model for irreducible $\SO(3,\RR)$-representations). More precisely, if $\triangle$ denotes the \emph{Laplacian operator} and $\bp = \psi(\bT)$, we get
\begin{equation*}
  \triangle \bp = n(n-1) \psi(\tr \bT).
\end{equation*}
Thus, \emph{totally symmetric tensors with vanishing trace} correspond precisely to \emph{harmonic polynomials}. This justifies the denomination of \emph{harmonic tensors} for elements of $\HT{n}(\RR^{3})$.

The following lemma gives the precise decomposition of a homogeneous polynomial of degree $n$ (and thus of a totally symmetric tensor of order $n$) into its irreducible components called \emph{harmonic components}.

\begin{lem}\label{lem:Proj_Harmonique}
  Let $\bp \in \RR_{n}[\RR^{3}]$ and $r=[n/2]$. Then we have
  \begin{equation}\label{eq:Harm_Decomp}
    \bp = \bh_{0} + \qq\bh_{1} + \dotsb + \qq^{r}\bh_{r}, \quad \bh_{k}\in \Hn{n-2k}(\RR^{3}),
  \end{equation}
  where $\qq(x,y,z) = x^{2} + y^{2} + z^{2}$ and $\bh_{k}$ is a harmonic polynomial defined recursively by
  \begin{equation*}
    \bh_{r} = \left\{
    \begin{array}{ll}
      \displaystyle{\frac{1}{(2r+1)!}}\triangle^{r} \bp,              & \hbox{if $n$ is even;} \\
      \displaystyle{\frac{3!\times (r+1)}{(2r+3)!}}\triangle^{r} \bp, & \hbox{if $n$ is odd,}
    \end{array}
    \right.
  \end{equation*}
  and for $k<r$
  \begin{equation*}
    \bh_{k} = \mu(k) \triangle^{k}\left(\bp - \sum_{j=k+1}^{r} \qq^{j} \bh_{j}\right),\quad \mu(k) := \frac{(2n-4k+1)!(n-k)!}{(2n-2k+1)!k!(n-2k)!}.
  \end{equation*}
\end{lem}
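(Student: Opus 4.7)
The plan is to prove this in three stages: first an intertwining identity between $\triangle$ and multiplication by $\qq$, then existence of the decomposition, and finally the explicit formulas for the $\bh_k$ by iterating the Laplacian.

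\medskip

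\noindent\textbf{Step 1: an intertwining identity.} I would first establish the following key formula by direct computation: if $\bh \in \Hn{m}(\RR^{3})$ and $j \geq 1$, then
\begin{equation*}
  \triangle(\qq^{j}\bh) = 2j(2m + 2j + 1)\,\qq^{j-1}\bh.
\end{equation*}
This comes from the Leibniz rule $\triangle(fg) = (\triangle f)g + 2\nabla f\cdot\nabla g + f\triangle g$ applied to $\qq^{j}$ and $\bh$, combined with $\triangle\qq = 6$, $\nabla\qq = 2(x,y,z)$, Euler's identity $(x,y,z)\cdot\nabla\bh = m\bh$, and an easy induction on $j$ to handle $\triangle\qq^{j}$. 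By induction on $s$, one then obtains
\begin{equation*}
  \triangle^{s}(\qq^{k}\bh_k) = \alpha_k^{(s)}\,\qq^{k-s}\bh_k,\qquad s \leq k, \quad \alpha_k^{(s)} := \prod_{i=0}^{s-1} 2(k-i)\bigl(2n-2k-2i+1\bigr),
\end{equation*}
whenever $\bh_k \in \Hn{n-2k}(\RR^{3})$, and $\triangle^{s}(\qq^{k}\bh_k) = 0$ for $s > k$.

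\medskip

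\noindent\textbf{Step 2: existence of the decomposition.} I would argue by induction on $n$ that $\RR_{n}[\RR^{3}] = \Hn{n}(\RR^{3}) \oplus \qq\,\RR_{n-2}[\RR^{3}]$. The dimension count $\dim\RR_n[\RR^{3}] = \binom{n+2}{2}$, $\dim\Hn{n}(\RR^{3}) = 2n+1$, and $\dim\qq\RR_{n-2}[\RR^{3}] = \binom{n}{2}$ (the multiplication by $\qq$ is injective because $\RR[x,y,z]$ is an integral domain) gives $(2n+1) + \binom{n}{2} = \binom{n+2}{2}$, so a direct sum decomposition will follow once the intersection is shown to be trivial. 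Triviality of $\Hn{n}\cap \qq\RR_{n-2}$ is obtained by noting that the pairing $(P,Q)\mapsto (P(\partial)Q)(0)$ on $\RR_{n}[\RR^{3}]$ makes multiplication by $\qq$ the adjoint of $\triangle$, so the two summands are orthogonal. Iterating this decomposition yields existence of \eqref{eq:Harm_Decomp}: write $\bp = \bh_{0} + \qq\tilde{\bp}$ with $\bh_{0}\in\Hn{n}$, then apply the induction hypothesis to $\tilde{\bp}\in\RR_{n-2}[\RR^{3}]$.

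\medskip

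\noindent\textbf{Step 3: explicit formulas.} With \eqref{eq:Harm_Decomp} in hand, I would apply $\triangle^{r}$ to both sides. By Step~1, all terms with $k < r$ vanish, and only $k=r$ survives, giving $\triangle^{r}\bp = \alpha_r^{(r)} \bh_r$ (with $\qq^{0}=1$). A bookkeeping computation of the product using $(2m+1)!! = (2m+1)!/(2^{m}m!)$ yields $\alpha_r^{(r)} = (2r+1)!$ in the even case $n=2r$ and $\alpha_r^{(r)} = (2r+3)!/(6(r+1))$ in the odd case $n = 2r+1$, matching the stated formulas for $\bh_r$. For $k < r$, I would subtract the already-determined higher terms: since
\begin{equation*}
  \bp - \sum_{j=k+1}^{r} \qq^{j}\bh_j = \sum_{j=0}^{k} \qq^{j}\bh_j,
\end{equation*}
applying $\triangle^{k}$ kills every summand but $j=k$ and gives $\triangle^{k}\bigl(\bp - \sum_{j>k}\qq^{j}\bh_j\bigr) = \alpha_k^{(k)}\bh_k$. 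One then computes $\alpha_k^{(k)} = 2^{k}k!\prod_{j=0}^{k-1}(2n-2k-2j+1)$ and converts the product of odd integers from $2n-4k+3$ to $2n-2k+1$ into factorials via $(2m+1)!! = (2m+1)!/(2^{m}m!)$, obtaining
\begin{equation*}
  \alpha_k^{(k)} = \frac{k!\,(2n-2k+1)!\,(n-2k)!}{(n-k)!\,(2n-4k+1)!},
\end{equation*}
so $\mu(k) = 1/\alpha_k^{(k)}$ exactly as stated.

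\medskip

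The main obstacle is not conceptual but combinatorial: carrying out the factorial simplification of $\alpha_k^{(k)}$ carefully enough to recognize the exact form given for $\mu(k)$, and in particular handling the parity of $n$ correctly in the formula for $\bh_r$. Everything else is either standard Fischer-type representation theory or a direct application of the Leibniz rule.
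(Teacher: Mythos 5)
Your proof is correct and follows essentially the same route as the paper's: the core in both cases is the identity $\triangle^{k}(\qq^{k}\bh)=\text{const}\cdot\bh$ for harmonic $\bh$ (your $\alpha_k^{(k)}$ agrees with the paper's $\lambda_{k}$ evaluated at degree $n-2k$), followed by recursive extraction of $\bh_{r},\dotsc,\bh_{0}$ and the same double-factorial bookkeeping. The only addition is your Step~2 proving existence of the decomposition via the Fischer pairing, which the paper takes as known.
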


\begin{proof}
  The lemma results from the following observation. If $\bh \in \Hn{n}(\RR^{3})$, then
  \begin{equation*}
    \triangle^{k}(\qq^{k}\bh) = \lambda_{k}(n)\bh, \qquad 1 \le k \le r,
  \end{equation*}
  where
  \begin{equation*}
    \lambda_{k}(n) = \frac{(2(n+k)+1)!\,k!\,n!}{(2n+1)!\,(n+k)!}.
  \end{equation*}
  We obtain first $\bh_{r}$ and then, recursively, $\bh_{r-1}, \dotsc,\bh_{0}$.
\end{proof}

% ----------------------------------------------------------------
\section{The Cartan map}
\label{sec:Cartan-map}
% ----------------------------------------------------------------

Let $\SL(2,\CC)$ be the group of matrices $\gamma \in M_{2}(\CC)$ of determinant $1$. Its Lie algebra $\slc(2,\CC)$ is the vector space of matrices $M \in M_{2}(\CC)$ with vanishing trace. This space is of (complex) dimension 3. An explicit isomorphism between $\CC^{3}$ and $\slc(2,\CC)$ is given by
\begin{equation}\label{eq:spinor-representation}
  \xx := (x,y,z) \mapsto M(\xx) = \left(
  \begin{array}{cc}
    -z   & x+iy \\
    x-iy & z    \\
  \end{array}
  \right).
\end{equation}

The adjoint action of $\SL(2,\CC)$ on $\slc(2,\CC)$
\begin{equation*}
  \Ad_{\gamma} : M \mapsto \gamma M \gamma^{-1}, \qquad M \in \slc(2,\CC), \, \gamma \in \SL(2,\CC),
\end{equation*}
preserves the complex quadratic form
\begin{equation*}
  \det M = -(x^{2} + y^{2} + z^{2})
\end{equation*}
on $\slc(2,\CC)$ and it can be checked, moreover, that $\det \Ad_{\gamma} = 1$ for all $\gamma \in \SL(2,\CC)$. Therefore, we deduce a group morphism
\begin{equation}\label{eq:universal-cover}
  \pi : \gamma \mapsto Ad_{\gamma}, \qquad \SL(2,\CC) \to \SO(3,\CC),
\end{equation}
where
\begin{equation*}
  \SO(3,\CC) := \set{g \in M_{3}(\CC); \; g^{t}g = I \quad \text{and} \quad \det g = 1}.
\end{equation*}

\begin{rem}
  $\SL(2,\CC)$ is a two-fold cover of $\SO(3,\CC)$. It is in fact its \emph{universal cover} and is called the \emph{spinor group} of $\SO(3,\CC)$.
\end{rem}

\begin{rem}
  We can restrict this morphism $\pi$ to the subgroup
  \begin{equation*}
    \SU(2) := \set{\gamma \in \SL(2,\CC);\; \bar{\gamma}^{t} \gamma = I}.
  \end{equation*}
  Note that the Lie algebra of $\SU(2)$ corresponds to matrices $M$ in \eqref{eq:spinor-representation}, with $x,y,z$ purely imaginary, and thus $\pi(\SU(2)) = \SO(3,\RR)$. $\SU(2)$ is moreover the spinor group of $\SO(3,\RR)$.
\end{rem}

Consider now the skew-symmetric 2-form on $\CC^{2}$
\begin{equation*}
  \omega(\bxi_{1},\bxi_{2}) := \det (\bxi_{1},\bxi_{2}), \qquad \bxi_{1},\bxi_{2} \in \CC^{2},
\end{equation*}
and define the mapping
\begin{equation*}
  \bxi \mapsto \bxi^{\omega}, \qquad \CC^{2} \to \left(\CC^{2}\right)^{*},
\end{equation*}
where $\left(\CC^{2}\right)^{*}$ is the dual of the vector space $\CC^{2}$, and
\begin{equation*}
  \bxi_{1}^{\omega}(\bxi_{2}) := \omega(\bxi_{1},\bxi_{2}).
\end{equation*}
In the canonical basis of $\CC^{2}$ and its dual basis, we get
\begin{equation*}
  \bxi = \left(
  \begin{array}{c}
    u \\
    v \\
  \end{array}
  \right),
  \qquad
  \bxi^{\omega} = \left(
  \begin{array}{cc}
    -v & u \\
  \end{array}
  \right).
\end{equation*}
The \emph{Cartan map} is defined as
\begin{equation}\label{eq:Cartan-map}
  \phi : \bxi = \left(
  \begin{array}{c}
    u \\
    v \\
  \end{array}
  \right)
  \mapsto
  \bxi\bxi^{\omega} = \left(
  \begin{array}{cc}
    -uv    & u^{2} \\
    -v^{2} & uv    \\
  \end{array}
  \right).
\end{equation}
Note that $\tr \bxi\bxi^{\omega} = 0$ and $\det \bxi\bxi^{\omega} = 0$, and we obtain, thus, a mapping
\begin{equation*}
  \phi : \CC^{2} \to \slc(2,\CC),
\end{equation*}
whose image lies inside the isotropic cone
\begin{equation*}
  C := \set{M \in \slc(2,\CC) ;\; \det M = 0}.
\end{equation*}
In the complex coordinates $(x,y,z)$ of $\slc(2,\CC)$ introduced in~\eqref{eq:spinor-representation}, the Cartan map is given by
\begin{equation*}
  x = \frac{u^{2} - v^{2}}{2}, \qquad y = \frac{u^{2} + v^{2}}{2i}, \qquad z = uv.
\end{equation*}
We deduce from this explicit expression that $\phi$ is surjective onto $C$ and that each point in the cone $C$ has exactly two pre-images $\bxi$ and $-\bxi$.

\begin{rem}
  Note that the mapping~\eqref{eq:spinor-representation} from $\CC^{3}$ to $\mathrm{End}(\CC^{2})$ satisfies:
  \begin{equation*}
    M(\xx)M(\yy) + M(\yy)M(\xx) = 2(\xx \cdot \yy)I,
  \end{equation*}
  where
  \begin{equation*}
    \xx \cdot \yy := x_{1}x_{2} + y_{1}y_{2} + z_{1}z_{2}, \qquad \xx, \yy \in \CC^{3},
  \end{equation*}
  and induces an irreducible representation of the complex Clifford algebra $\mathrm{Cl}_{3}(\CC)$ into $\mathrm{End}(\CC^{2})$. The Cartan map~\eqref{eq:Cartan-map} was introduced by Cartan in 1913 (see~\cite{Car1981}) and rediscovered later by physicists (see for instance~\cite{Bac1970,Hob1955}). The two pre-images, $\bxi$ and $-\bxi$, of a matrix $M \in C$ by $\phi$ are called \emph{spinors} and are extremely important mathematical objects in quantum mechanics. Note that if we write a vector $\vv = (x,y,z) \in \CC^{3}$ as $\vv = \vv_{1} + i\vv_{2}$, where $\vv_{j}\in \RR^{3}$, the condition $\vv \in C$, \textit{i.e}:
  \begin{equation*}
    x^{2} + y^{2} + z^{2} = 0
  \end{equation*}
  means that
  \begin{equation*}
    \norm{\vv_{1}} = \norm{\vv_{2}}, \quad \text{and} \quad \vv_{1} \cdot \vv_{2} = 0.
  \end{equation*}
\end{rem}

\begin{lem}\label{lem:Cartan-equiv}
  The Cartan map
  \begin{equation*}
    \phi : \CC^{2} \to \slc(2,\CC),
  \end{equation*}
  is $\SL(2,\CC)$-equivariant, i.e.
  \begin{equation*}
    \phi(\gamma \cdot \bxi) = \Ad_{\gamma}\phi(\bxi), \qquad \bxi \in \CC^{2}, \, \gamma \in \SL(2,\CC).
  \end{equation*}
\end{lem}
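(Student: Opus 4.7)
The plan is to exploit the factorization $\phi(\bxi) = \bxi\,\bxi^{\omega}$ and reduce the equivariance statement to the $\SL(2,\CC)$-invariance of the symplectic form $\omega$. The only nontrivial computation is to determine how the covector $\bxi^{\omega}$ transforms under $\gamma$; once this is settled the result follows immediately by matrix multiplication.

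First I would establish the auxiliary identity
\begin{equation*}
  (\gamma\cdot\bxi)^{\omega} \;=\; \bxi^{\omega}\,\gamma^{-1}, \qquad \gamma\in\SL(2,\CC),\ \bxi\in\CC^{2}.
\end{equation*}
Because $\det\gamma=1$, the $2$-form $\omega(\bxi_{1},\bxi_{2})=\det(\bxi_{1},\bxi_{2})$ is $\SL(2,\CC)$-invariant, i.e.\ $\omega(\gamma\bxi_{1},\gamma\bxi_{2})=\omega(\bxi_{1},\bxi_{2})$. Applying this to $\bxi_{1}=\bxi$ and $\bxi_{2}=\gamma^{-1}\eta$ for an arbitrary $\eta\in\CC^{2}$, one gets
\begin{equation*}
  (\gamma\bxi)^{\omega}(\eta) = \omega(\gamma\bxi,\eta) = \omega(\bxi,\gamma^{-1}\eta) = \bxi^{\omega}(\gamma^{-1}\eta) = (\bxi^{\omega}\gamma^{-1})(\eta),
\end{equation*}
which is the claimed identity.

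With this in hand the equivariance is a one-line computation:
\begin{equation*}
  \phi(\gamma\cdot\bxi) = (\gamma\bxi)(\gamma\bxi)^{\omega} = \gamma\,\bxi\,\bxi^{\omega}\,\gamma^{-1} = \Ad_{\gamma}\bigl(\bxi\bxi^{\omega}\bigr) = \Ad_{\gamma}\phi(\bxi).
\end{equation*}
If one prefers to avoid the intrinsic argument, one can alternatively verify the identity by writing $\gamma=\begin{pmatrix}a&b\\c&d\end{pmatrix}$ with $ad-bc=1$ and expanding both sides of $(\gamma\bxi)^{\omega}=\bxi^{\omega}\gamma^{-1}$ directly, using $\gamma^{-1}=\begin{pmatrix}d&-b\\-c&a\end{pmatrix}$.

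There is no real obstacle here; the only point one has to be careful about is the order in which $\gamma$ acts on the covector side. This is precisely where the condition $\det\gamma=1$ is used, so the statement fails (up to a scalar) for general $\GL(2,\CC)$. Once the transformation rule for $\bxi^{\omega}$ is correctly identified, the rest of the proof is purely formal.
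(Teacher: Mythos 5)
Your proof is correct and follows essentially the same route as the paper: both reduce the claim to the identity $(\gamma\bxi)^{\omega} = \bxi^{\omega}\gamma^{-1}$ and then conclude by the factorization $\phi(\bxi)=\bxi\bxi^{\omega}$. The only difference is that you actually justify that identity via the $\SL(2,\CC)$-invariance of $\omega$, whereas the paper states it without proof.
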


\begin{proof}
  We have
  \begin{equation*}
    \phi(\gamma \cdot \bxi) = (\gamma \bxi)(\gamma \bxi)^{\omega}
  \end{equation*}
  but
  \begin{equation*}
    (\gamma \bxi)^{\omega} = \bxi^{\omega}\gamma^{-1},
  \end{equation*}
  thus
  \begin{equation*}
    \phi(\gamma \cdot \bxi) = (\gamma \bxi)(\bxi^{\omega}\gamma^{-1}) = \Ad_{\gamma} \bxi\bxi^{\omega} = \Ad_{\gamma}\phi(\bxi).
  \end{equation*}
\end{proof}

\end{document}